\newtheorem{theorem}{Theorem}
\newtheorem{remark}{Remark}
\newtheorem{assumption}{Assumptions}
\def\BibTeX{{\rm B\kern-.05em{\sc i\kern-.025em b}\kern-.08em
    T\kern-.1667em\lower.7ex\hbox{E}\kern-.125emX}}
\begin{document}
\title{Robust Kalman filters with unknown covariance of multiplicative noise}
\author{Xingkai Yu and Ziyang Meng, ~\IEEEmembership{Senior Member,~IEEE}
\thanks{This work has been supported in part by the National Natural Science Foundation of China under Grants U19B2029 and 62103222. (Corresponding author: Ziyang Meng.)}
\thanks{X. Yu and Z. Meng are with the  Department of Precision Instrument, Tsinghua University, Beijing, 100084, China (e-mail: xkyu2007@mail.tsinghua.edu.cn; ziyangmeng@mail.tsinghua.edu.cn). }
}

\maketitle

\begin{abstract}
In this paper, the joint estimation of state and noise covariance for linear systems with unknown covariance of multiplicative noise is considered. The measurement likelihood is modelled as a mixture of two Gaussian distributions and a Student's $\emph{t}$ distribution, respectively. The unknown covariance of multiplicative noise is modelled as an inverse Gamma/Wishart distribution and the initial condition is formulated as the nominal covariance. By using robust design and choosing hierarchical priors, two variational Bayesian (VB) based robust Kalman filters are proposed. The stability and convergence of the proposed filters and the covariance parameters are analyzed. The lower and upper bounds are also provided to guarantee the performance of the proposed filters. A target tracking simulation is provided to validate the effectiveness of the proposed filters.

\end{abstract}

\begin{IEEEkeywords}
Kalman filter, multiplicative noise, unknown covariance, variational Bayesian (VB).
\end{IEEEkeywords}

\section{Introduction}

The Kalman filter (KF) is widely used in state estimation problems and is shown to be optimal in the sense of minimum variance for linear systems with Gaussian white additive measurement and process noises \cite{1100100}. On the one hand,
the usefulness of the Kalman filter is limited by  the \textit{prior} information of noise statistics. The inaccurate usage of prior information may result in bad performance and even divergence. On the other hand, in many practical applications, the prior information may be unknown and then the robust adaptive Kalman filter is introduced to solve such a problem, which can be classified into four branches including maximum likelihood, Bayesian, covariance matching, and correlation approaches. 
The typical implements of adaptive KF include \textit{innovation-based adaptive filter} \cite{karasalo2011optimization}, 
\textit{interacting multiple model filter},  and the \textit{Sage-Husa adaptive filter}  \cite{6179988}. In particular, 
\textit{innovation-based adaptive filter} takes advantage of the innovation sequence and a maximum likelihood criterion to estimate the noise covariance. \textit{Interacting multiple model filter} is a Bayesian approach, while other methods can be seen as its approximations.  \textit{Sage-Husa adaptive filter}  leverages the covariance matching approach and the maximum-a-posterior criterion to estimate the noise statistics recursively. This class of adaptive KF may suffer from issues of convergence, practical applications, and computation burden \cite{sarkka2009recursive}. Therefore, many improved robust adaptive filters were subsequently introduced. In particular, generalized robust noise-identification filters are proposed by using Huber's theory and the expectation maximization algorithm \cite{schon2011system}. A maximum correntropy Kalman filter is proposed to address the heavy-tailed noise problem \cite{chen2017maximum}. However, the aforementioned filters need the known nominal covariance as prior information. Then, Student's $\emph{t}$ filters \cite{huang2016robust}, robust Student's  $\emph{t}$ filters \cite{huang2017novel1} and  Kullback-Leibler divergence based Student's $\emph{t}$ filters \cite{huang2019novel} are proposed to address this problem.  

More recently, a variational Bayesian (VB) based adaptive filter has received much attention since it can be used to perform approximate posterior inference and to estimate uncertain hidden parameters or state variables \cite{4585346} in broad applications including machine learning, visual tracking, signal processing, etc \cite{ji2006variational}. In fact, using appropriate conjugate prior distributions, the existing VB approaches are provided to approximate the additive measurement/process noise covariance matrices \cite{sarkka2009recursive,agamennoni2012approximate}. Due to the strong adaptability of VB, a large number of research results are subsequently obtained. These existing VB based filters usually model heavy-tailed non-Gaussian/Gaussian additive process and measurement noises with the Student's $\emph{t}$-distribution (Std) and model additive noise covariance with the Wishart distribution, the inverse Wishart distribution, or the inverse Gamma distribution \cite{huang2017novel,zhu2021novel}.

To the best of the authors' knowledge, all aforementioned robust adaptive Kalman filters (KFs) are provided to address  state and noise covariance  joint estimation problems for additive noises but are not applicable to the case of  multiplicative noise. In fact, \textit{multiplicative} noise is ubiquitous in practical applications including uncertain measurements and fading or reflection of the transmitted signal over channels \cite{wang2020novel,gao2019robust,yu2019estimation}. Since the product of two Gaussian distributions (multiplicative noise and state) is shown to be a compressed or amplified Gaussian distribution (for the same variable) and non-Gaussian distribution (for different variables) \cite{liu2020quadratic,springer1966distribution},
this difficult issue has attracted widespread attention in different fields \cite{yu2019estimation}. Although 
there are many robust adaptive KFs that can estimate the unknown covariance of additive noise, ones still have no suitable solution to estimate the unknown covariance of multiplicative noise. 

The main contributions of this paper are summarized as follows.
 1) In this paper, the Student's $\emph{t}$ distribution and a mixture of two Gaussian (MtG) distributions, in which one is used to depict the additive noise and the other is to depict the multiplicative noise, are proposed to model measurement likelihood and to perform state estimation with unknown covariance of multiplicative noise. The distribution of multiplicative covariance is chosen as an inverse Gamma/Wishart distribution. The unknown parameters (states and covariance parameters) are jointly estimated in the VB framework. 2) Compared with the existing VB filters for additive noises \cite{huang2016robust}, we propose an improved Student's $\emph{t}$-distribution based Kalman filter \cite{huang2017novel1} that is applicable to multiplicative noise. To solve the technical issue raised by the multiplicative noise, the variational Bayesian inference is employed. Moreover, the proposed filter is capable of estimating the covariances of multiplicative and additive measurement noises as a whole, while the filters proposed in \cite{yu2019estimation} often operate in a separate way. Towards this end,  two robust Kalman filters are presented to jointly estimate the covariance of multiplicative noise and states based on the VB inference. 3) The stability and convergence of the proposed filters, the covariance parameters, and the VB inference are analyzed. The lower and upper bounds are also provided to guarantee the performance of the proposed filters.

The remainder of this paper is organized as follows. Sections II and III state the considered problem and the proposed filters, respectively. Then, Section IV evaluates the performance and 
Section V gives a numerical simulation example. Finally, Section VI concludes the paper.

\textbf{Notation.}  $\mathrm{~N}(\mathbf{z}_{k} ;  \mathbf{x}_{k}, \tilde{\mathbf{R}}_{k})$ denotes that stochastic vector $\mathbf{z}_{k}$ obeys a Gaussian distribution with a mean vector $\mathbf{x}_{k}$ and a covariance matrix $\tilde{\mathbf{R}}_{k}$. $\mathrm{G}(\lambda_{k} ; \frac{1}{2}v, \frac{1}{2}v)$ denotes that variable $\lambda_{k}$ obeys a Gamma distribution with a dof-parameter variable $v$. $\operatorname{IG}(\sigma_{k} ; {\alpha}_{k}, {\beta}_{k})$ denotes that variable $\sigma_{k}$ obeys an inverse Gamma distribution with a shape parameter ${\alpha}_{k}$  and a scale parameter ${\beta}_{k}$. $\operatorname{IW}({{\bf{R}}}_{k} ; {u}_{k}, {\mathbf{U}}_{k})$ denotes that stochastic matrix ${{\bf{R}}}_{k}$ obeys an  inverse Wishart distribution with a dof-parameter variable ${u}_{k}$ and a scale matrix ${\mathbf{U}}_{k}$. $\operatorname{St}(\mathbf{z}_{k} ;  \mathbf{x}_{k}, {\bf{R}}_{k}, v)$ denotes that stochastic variable $\mathbf{z}_{k}$ obeys a Student's  $\emph{t}$ distribution with a location parameter $\mathbf{x}_{k}$, a scale parameter ${\bf{R}}_{k}$ and  a dof-parameter variable $v$. $\hat{\mathbf{x}}_{k | k-1}$, $\hat{\mathbf{x}}_{k | k}$, ${\bf{{{P}}}}_{k|k-1}$, and ${\bf{{{P}}}}_{k|k}$ denote the estimated values of state $\mathbf{x}_{k}$ in the time-update step, the values in the measurement-update step, and their corresponding estimation error covariances, respectively. $\mathrm{E}[\cdot]$, $\log[\cdot]$, and $\text{tr}(\cdot)$ denote the expectation operation, the logarithmic function, and the trace operation, respectively. $p(\bf{x})$ and $q(\bf{x})$ respectively denote the distribution of $\bf{x}$ and the approximated distribution of $p(\bf{x})$. Finally, $\|\cdot\|$ and  $|\cdot|$ denote, respectively,  the 2-Euclid-norm  and the determinant of a matrix or a vector.

\section{Problem Formulation}\label{1}

 Consider a discrete stochastic system contaminated by multiplicative noise described by the following state-space model
\begin{align}
{\bf{x}}_{k} &= {\bf{F}}{\bf{x}}_{k-1} + {\bf{w}}_{k-1}, \label{eq1}\\
{\bf{z}}_{k} &= m_{k}{\bf{H}}{\bf{x}}_{k} + {\bf{v}}_{k}, \label{eq2}
\end{align}
where 
${\bf{x}}_{k} \in {{\mathbb{R}}^n}$ denotes the state vector, ${\bf{z}}_{k} \in {{\mathbb{R}}^m}$ denotes the measurement vector, ${\bf{F}}\in {{\mathbb{R}}^{n\times n}}$ and  ${\bf{H}}\in {{\mathbb{R}}^{m\times n}}$ denote the given state transition matrix and the observation matrix, respectively, and ${\bf{w}}_{k}\in {{\mathbb{R}}^n}$ and ${\bf{v}}_{k}\in {{\mathbb{R}}^m}$ are mutually uncorrelated  Gaussian white additive noises with zero means and covariance matrices ${\bf{Q}}_{k}$ and ${\bf{R}}_{k}$, respectively. In addition, $m_{k}\in {{\mathbb{R}}}$ is an unknown Gaussian multiplicative noise with mean $\bar{m}_{k}$ and covariance ${\bf{\sigma}}_{k}$. 

\textbf{ Measurement likelihood model}. Here, $m_{k}$ is introduced and leads to the non-Gaussian change of measurement likelihood. To characterize this non-Gaussian property, in this paper, the measurement likelihood is modelled as an  MtG distribution model or a Student's $\emph{t}$ distribution (Std) model respectively given by
\begin{align}
p(\mathbf{z}_{k} | \mathbf{x}_{k})&=\mathrm{~N}(\mathbf{z}_{k} ; 0, \tilde{\mathbf{R}}_{k})+ \mathrm{N}\left(\mathbf{z}_{k} ; \bar{m}_{k} \mathbf{H} \mathbf{x}_{k}, \check{\bf{R}}_{k}\right),\label{eq3}\\
 p(\mathbf{z}_{k} | \mathbf{x}_{k})&=\operatorname{St}(\mathbf{z}_{k} ; \bar{m}_{k}\mathbf{H} \mathbf{x}_{k}, \bar {\bf{R}}_{k}, v)\notag \\
&=\mathrm{N}(\mathbf{z}_{k} ; \bar{m}_{k}{\bf{H}}{\bf{x}}_{k}, \bar {\bf{R}}_{k} / \lambda_{k}) \mathrm{G}(\lambda_{k} ; \frac{v}{2}, \frac{v}{2}),\label{eq4}
\end{align}  
where $v$ is the dof-parameter variable, $\lambda_{k}$ denotes scale-parameter variable and $\lambda_{k}\ll1$. Here, $\bar{\bf{R}}_{k}$, $\tilde{\mathbf{R}}_{k}$, and 
$\check{\bf{R}}_{k}$ are respectively the covariances of the mixing total measurement noise, multiplicative measurement noise (in MtG), and the additive measurement noise as 
 \begin{align}
 \bar{\bf{R}}_{k}&=\tilde{\bf{R}}_{k}+\check{\bf{R}}_{k},\tilde{\bf{R}}_{k}= {\bf{\sigma}}_{k}{\bf{H}}{\bf{{{S}}}}_{k}{\bf{H}}^{\top},\label{eq5}\\
  \check{\bf{R}}&={\bar{m}_{k}^2}{\bf{H}}{\bf{{{P}}}}_{k|k-1}{\bf{H}}^{\top}+{\bf{R}}_{k},  {\bf{{{S}}}}_{k}=\text{E}[\mathbf{x}_{k}\mathbf{x}_{k}^{\top}].\label{eq6}
 \end{align}
In order to better regulate the likelihood, the prior on $\lambda_{k}$ is chosen as $p(\lambda_{k})=\mathrm{G}(\lambda_{k} ; \frac{1}{2}v, \frac{1}{2}v)$. In addition, Gamma prior on $v$ is introduced as a Gamma distribution $p(v)= \mathrm{G}(v ; c_0, d_0)$. 
 
\begin{remark}
\emph{The reasons that we choose likelihood  $p({\bf{z}}_{k}|{\bf{x}}_{k})$ as a Student's t distribution or an MtG distribution are as follows. In the first place, it is well known that the product of Gaussian probability density functions (PDFs) for different variables is generally the Meijer G function  \cite{bromiley2003products,springer1966distribution}. This means that the PDF of $m_{k} \times {\bf{x}}_{k}$ in (2) is Meijer G, and therefore the likelihood  $p({\bf{z}}_{k}|{\bf{x}}_{k})$ is non-Gaussian. 
Meijer G functions are usually not analytically tractable. 
Therefore, we choose an approximation method.  On the other hand, we also know that an infinite mixture of Gaussian PDFs can approach arbitrary distribution \cite{bishop2006pattern}.  Student's ${t}$ is an infinite mixture of Gaussian PDFs and therefore can be used to represent a Meijer G-function. In the second place, according to the definition of covariance of measurement equation, MtG is a mixture of two Gaussian distributions, where    $\mathrm{N}\left(\mathbf{z}_{k} ; \bar{m}_{k} \mathbf{H} \mathbf{x}_{k}, \check{\bf{R}}_{k}\right)$ is used to model the disturbance of multiplicative noise and the other $\mathrm{~N}(\mathbf{z}_{k} ; 0, {\tilde{\mathbf{R}}_{k})}$ is to model the influence of additive measurement noise.}
\end{remark}

\textbf{ Covariance model}. For the two models (\ref{eq3})-(\ref{eq4}), we characterize the multiplicative noise covariance as a random variable, which respectively obeys an \emph{inverse Gamma} distribution and an \emph{inverse Wishart} distribution. For model (\ref{eq3}), we directly model the the multiplicative noise covariance $\sigma_{k}$. While, for model (\ref{eq4}), we model the unknown covariance of measurement noises ($m_{k}$ and ${\bf{v}}_{k}$) as a whole ${\bar{\bf{R}}}_{k}$. 

First, we define  $p(\sigma_{k} | \mathbf{z}_{1: k-1}) =\operatorname{IG}(\sigma_{k} ; \hat{\alpha}_{k | k-1}, \hat{\beta}_{k | k-1})$ (If $m_{k}\in {{\mathbb{R}}^{m\times m}}$, then choose \emph{inverse Wishart} distribution (IW), since IW is the general matrix case of IG \cite{sarkka2009recursive}). Let the posterior of $\sigma_{k-1} $ be $p(\sigma_{k-1} | \mathbf{z}_{1: k-1}) =\operatorname{IG}(\sigma_{k-1} ; \hat{\alpha}_{k-1 | k-1}, \hat{\beta }_{k-1 | k-1})$. This paper uses the similar heuristics as given in \cite{sarkka2009recursive}, i.e., 
\begin{align}
\hat{\alpha}_{k | k-1}&=\rho\hat{\alpha}_{k-1 | k-1}, \label{eq7}\\
\hat{{\beta}}_{k | k-1}&=\rho \hat{{\beta}}_{k-1 | k-1},\label{eq8}
\end{align} 
where $\rho \in(0,1]$. The initial value is chosen as $\frac{\hat{\mathbf{\beta}}_{0 | 0}}{\hat{\alpha}_{0 | 0}}={\sigma}_{0}$, where ${\sigma}_{0}$ is an empirical constant.

Then, similarly, define  $p({\bar{\bf{R}}}_{k} | \mathbf{z}_{1: k-1}) =\operatorname{IW}({\bar{\bf{R}}}_{k} ; \hat{u}_{k | k-1}, \hat{\mathbf{U}}_{k | k-1})$ and choose the similar heuristics  in \cite{huang2017novel} as 
\begin{align}
\hat{u}_{k \mid k-1}&=\rho(\hat{u}_{k-1 \mid k-1}-m-1)+m+1, \label{eq9}\\
\hat{\mathbf{U}}_{k \mid k-1}&=\rho \hat{\mathbf{U}}_{k-1 \mid k-1}, \label{eq10}
\end{align}
with the initial value being 
$
\frac{\hat{\mathbf{U}}_{0 | 0}}{\hat{u}_{0|0}-m-1}=\bar{\mathbf{R}}_{0},
$
where $\bar{\mathbf{R}}_{0}$ is also a chosen initial value.

Before presenting the main results, we formally provide the following assumption. 
\begin{assumption}
\emph{Both the additive process noise ${\bf{w}}_{k}$ and the additive measurement noise ${\bf{v}}_{k}$ are zero-mean Gaussian white noises.  The multiplicative measurement noise $m_{k}$ is a non-zero mean Gaussian noise. The initial state vector ${\bf{x}}_{0}$ is assumed to be a Gaussian distribution with mean $\hat{\mathbf{x}}_{0 | 0}$ and covariance matrix ${\bf{{{P}}}}_{0|0}$. The initial state ${\bf{x}}_{0}$  and the noise sequences ($\{{\bf{w}}_{k}\}$, $\{{\bf{v}}_{k}\}$, and $\{m_{k}\}$) are mutually independent. }
\end{assumption}

\textbf{One-step prediction}. The derivation of the desired filter includes one-step prediction (time update) step and measurement update step. Since the existence of multiplicative noise does not affect the state equation (\ref{eq1}), according to the standard Kalman filter, 
the one-step prediction of state is modelled as a Gaussian distribution 
\begin{equation}
p({\bf{x}}_{k}|{\bf{z}}_{1:k-1}) =\text{N}({\bf{x}}_{k}; {\bf{{\hat{x}}}}_{k|k-1}, {\bf{{{P}}}}_{k|k-1} ), 
\end{equation} 
where  one-step prediction  ${\bf{{\hat{x}}}}_{k|k-1}$ and its corresponding error covariance matrix  ${\bf{{{P}}}}_{k|k-1}$ are given by 
\begin{align}
{\bf{{\hat{x}}}}_{k|k-1} &= {\bf{F}}{\bf{{\hat{x}}}}_{k-1|k-1}, \label{eq12} \\
{\bf{{{P}}}}_{k|k-1} &={\bf{F}}{\bf{{{P}}}}_{k-1|k-1}{\bf{F}}^{\top} + {\bf{Q}}_{k-1}. \label{eq13}
\end{align}
The system model, the proposed unknown covariance distribution model, and the one-step prediction are formulated as in (\ref{eq1})-(\ref{eq13}). We next present the measurement update and then form the whole filter in the following section.
\begin{remark}
\emph{The considered model (\ref{eq2}) can be used to model many practical applications in the fields of communication, signal processing, petroleum seismic exploration, target tracking, and so on \cite{liu2015optimal,yang2013optimal}. Different from additive measurement noise, the second- and high-order statistics of multiplicative measurement noise are usually unknown and  difficult to estimate because they trigger additional difficulties and lead to great fluctuations for state signals. }
\end{remark}

\section{The presented filters}\label{section Problem Formulation}

In this section, we first introduce the VB inference, which will be used together with the fixed-point iteration method in the derivation of the desired filter. Then, we present two filters to address the problem of unknown multiplicative noise covariance, one based on the assumption that the likelihood function is the MtG model and the other on the Std model.

\subsection{VB inference}\label{section Problem Formulation1}

Based on Bayes' rule,  we have
$p(\mathbf{\varPhi}_{k} | \mathbf{z}_{1: k}, \varPsi_{k})=\frac{p\left(\mathbf{z}_{1: k} | \varPhi_{k}\right) p\left(\varPhi_{k} | \varPsi_{k}\right)}{p\left(\mathbf{z}_{1 \cdot k} | \varPsi_{k}\right)}$, where $\varPhi_{k}=\{{\bf{x}}_{k},{\bf{\sigma}}_{k}, {\bar{\bf{R}}}_{k}\}$ and $\varPsi_{k}=\{\hat{\alpha}_{k | k-1}, \hat{\beta}_{k | k-1},  \hat{{u}}_{k | k-1}, \hat{\mathbf{U}}_{k | k-1}\}$. Since we cannot obtain the analytical solution of $p(\mathbf{\varPhi}_{k} | \mathbf{z}_{1: k}, \varPsi_{k})$, $p(\mathbf{\varPhi}_{k} | \mathbf{z}_{1: k}, \varPsi_{k})$ is approximated using a distribution $q(\mathbf{\varPhi}_{k})$. Based on variational inference \cite{bishop2006pattern}, it yields 
\begin{equation}
\log q(\theta) =\mathrm{E}_{{\varPhi_{k}}^{(-\theta)}}\left[\log p\left(\mathbf{z}_{1: k} | \varPhi_{k}\right) p\left(\varPhi_{k} | \varPsi_{k}\right)\right], \label{eq14}
\end{equation}
where $\theta$ is a member of $\varPhi_{k}$ and  ${-\theta}$ represents its complementary set. Then, by taking $\varPhi_{k}=\{{\bf{x}}_{k},{\bf{\sigma}}_{k}\}$ as an example, we have 
\begin{scriptsize}
\begin{align}\label{eq15}
p\left(\mathbf{z}_{1: k} | \varPhi_{k}\right) p\left(\varPhi_{k} | \varPsi_{k}\right)
=p(\mathbf{z}_{1: k-1}) p({{\sigma}}_{k} |{\bf{z}}_{1:k-1}) p({\bf{x}}_{k}|{\bf{z}}_{1:k-1}) p(\mathbf{z}_{k} | \mathbf{x}_{k}, {{\sigma}}_{k}).
\end{align}
\end{scriptsize}
Since (\ref{eq15}) cannot be solved directly, we apply the fixed-point iteration approach to solve it, i.e., $q(\theta)$ is updated as $q^{(i+1)}(\theta)$ at the ($i+1$)-th iteration by $q^{(i)}({\varPhi_{k}^{(-\theta)}})$ under total $L$ iterations \cite{bishop2006pattern}. 

We next specify the likelihood function (\ref{eq3}) to obtain the first proposed filter (Algorithm 1), and then specify the likelihood function (\ref{eq4})  to obtain the second proposed filter (Algorithm 2).

\subsection{Proposed filters}

\subsubsection{Measurement update based on MtG assumption }

Rewrite (\ref{eq15}) as 
\begin{small}
\begin{align}\label{eq16}
&p\left(\mathbf{z}_{1: k} | \varPhi_{k}\right) p\left(\varPhi_{k} | \varPsi_{k}\right) =\text{N}({\bf{x}}_{k}; {\bf{{\hat{x}}}}_{k|k-1}, {\bf{{{P}}}}_{k|k-1} ) \operatorname{IG}(\sigma_{k} ; \hat{\alpha}_{k | k-1}, \hat{\beta}_{k | k-1})\notag\\
&\quad \quad \quad \quad \times p(\mathbf{z}_{1: k-1}) \mathrm{N}(\mathbf{z}_{k} ; \bar{m}_{k}{\bf{H}}{\bf{x}}_{k}, \tilde{\mathbf{R}}_{k}) \mathrm{N}\left(\mathbf{z}_{k} ; \bar{m}_{k} \mathbf{H} \mathbf{x}_{k}, \check{\mathbf{R}}_{k} \right).
\end{align}
\end{small}
First, we let $\theta=\mathbf{x}_{k}$ and substitute (\ref{eq16}) into (\ref{eq14}). It then follows that
 \begin{align*}
\log q^{(i+1)}\left(\mathbf{x}_{k}\right) &\propto\mathrm{N}(\mathbf{z}_{k} ; \bar{m}_{k}{\bf{H}}{\bf{x}}_{k}, \tilde{\mathbf{R}}_{k}^{(i)}) \mathrm{N}(\mathbf{z}_{k} ; \bar{m}_{k}{\bf{H}}{\bf{x}}_{k}, \check{\mathbf{R}}_{k}^{(i)})\\
&\times \mathrm{N}(\mathbf{x}_{k} ; \hat{\mathbf{x}}_{k \mid k-1}, {\mathbf{P}}_{k \mid k-1}^{(i)}).
 \end{align*}
Therefore, $q^{(i+1)}(\mathbf{x}_{k})$ can be approximated using 
$q^{(i+1)}(\mathbf{x}_{k})=\mathrm{N}(\mathbf{x}_{k} ; \hat{\mathbf{x}}_{k \mid k}^{(i+1)},\mathbf{P}_{k \mid k}^{(i+1)})$,
where $\hat{\mathbf{x}}_{k | k}^{(i+1)}$ and $\mathbf{P}_{k | k}^{(i+1)}$ are given by
\begin{scriptsize}
	\begin{align}
&{\bar{\mathbf{R}}_{k}}^{(i+1)}= {\bf{\sigma}}_{k}^{(i)}{\bf{H}}{\bf{{{S}}}}_{k}^{(i)}{\bf{H}}^{\top}+{\bar{m}_{k}^2}{\bf{H}}{\bf{{{P}}}}_{k|k-1}{\bf{H}}^{\top}+{\bf{R}}_{k},  \label{eq17}\\
&\mathbf{K}_{k}^{(i+1)}=\bar{m}_k{\mathbf{P}}_{k | k-1} \mathbf{H}^{\top}[{\bar{\mathbf{R}}_{k}}^{(i+1)}]^{-1},\\
&\hat{\mathbf{x}}_{k | k}^{(i+1)}=\hat{\mathbf{x}}_{k | k-1}+\mathbf{K}_{k}^{(i+1)}(\mathbf{z}_{k}-\bar{m}_k\mathbf{H} \hat{\mathbf{x}}_{k | k-1}),\\
&\mathbf{P}_{k | k}^{(i+1)}={\mathbf{P}}_{k | k-1}-\bar{m}_k\mathbf{K}_{k}^{(i+1)} \mathbf{H} {\mathbf{P}}_{k | k-1},\\
&{\bf{{{S}}}}_{k}^{(i+1)}\approx\text{E}[ {\bf{{\hat{x}}}}_{k|k}^{(i+1)}  ({\bf{{\hat{x}}}}_{k|k}^{(i+1)})^{\top}], \label{eq21}
\end{align}
\end{scriptsize}
where $ \hat{\mathbf{x}}_{k | k-1}$ and  ${\mathbf{P}}_{k | k-1}$ are given in (\ref{eq12})-(\ref{eq13}).

Then, letting $\theta=\sigma_{k}$ and substituting (\ref{eq16}) into (\ref{eq14}) yields 
$\log q^{(i+1)}\left(\sigma_{k}\right) \propto\mathrm{N}(\mathbf{z}_{k} ; \bar{m}_{k}{\bf{H}}{\bf{x}}_{k}, \tilde{\mathbf{R}}_{k}^{(i)})
\operatorname{IG}(\sigma_{k} ; \hat{\alpha}_{k | k}, \hat{\beta}_{k | k})$, we have 
$q^{(i+1)}(\sigma_{k})=\operatorname{IG}(\sigma_{k}; \hat{\alpha}_{k|k}^{(i+1)}, \hat{\beta}_{k|k}^{(i+1)})$, where 
\begin{scriptsize}
\begin{align}
&\hat{\alpha}_{k|k}^{(i+1)}=\hat{\alpha}_{k|k-1}+\frac{1}{2},\label{eq22}\\
&\hat{\beta}_{k|k}^{(i+1)}=\hat{\beta}_{k|k-1}+\frac{1}{2} \text{tr}  \label{eq23}\\
&\resizebox{\hsize}{!}{$\left[ (\mathbf{z}_{k}-\bar{m}_{k}{\bf{H}}\hat{\mathbf{x}}_{k | k}^{(i+1)})({\bf{H}}{\bf{{{S}}}}_{k}^{(i+1)}{\bf{H}}^{\top})^{-1}(\mathbf{z}_{k}-\bar{m}_{k}{\bf{H}}\hat{\mathbf{x}}_{k | k}^{(i+1)})^{\top}+\bar{m}_k^2\mathbf{H} {\mathbf{P}}_{k | k}^{(i+1)}\mathbf{H}^{\top}\right],$}\notag\\
&\hat{\sigma}_{k}^{(i+1)}={\hat{\beta}_{k|k}^{(i+1)}}/{\hat{\alpha}_{k|k}^{(i+1)}}. \label{eq24}
\end{align}
\end{scriptsize}

The proposed MtG filter is summarized as Algorithm 1, where $L$ denotes the iteration number and $\eta$ denotes the threshold.   
\begin{algorithm}[htb] 
	\caption{Proposed MtG filter} 
	\label{alg:Framwork} 
	  \begin{algorithmic}[1] 
		\Require 
		$\mathbf{F}$, $\mathbf{H}$, 
		${\mathbf{Q}}_{k}$, ${\mathbf{R}}_{k}$,  $\bar{m}_k$,
		$\hat{\mathbf{x}}_{k-1 | k-1}$, $\mathbf{P}_{k-1 | k-1}$,  $\mathbf{z}_{k}$, $\rho$, $\hat{\alpha}_0$, $\hat{\beta}_0$, $\eta$, $L$;
		\State Initialize $\hat{\alpha}_{k | k-1}$  and 
	$\hat{\beta}_{k | k-1}$ as in (\ref{eq7})-(\ref{eq8});  
		\label{code:fram:extract} 
		\State Update  ${\bf{{\hat{x}}}}_{k|k-1} $
		and ${\bf{{{P}}}}_{k|k-1}$ as in (\ref{eq12})-(\ref{eq13}); \\ ${\hat{\alpha}_{k|k}^{(0)}}=\hat{\alpha}_{k|k-1}+1/2$, ${\hat{\beta}_{k|k}^{(0)}}=\rho\hat{\beta}_{k|k-1}$,\\ $\hat{\sigma}_{k}^{(0)}=\frac{\hat{\beta}_{k|k}^{(0)}}{\hat{\alpha}_{k|k}^{(0)}}$, ${\bf{{{S}}}}_{k}^{(0)}\approx\text{E}[ {\bf{{\hat{x}}}}_{k|k-1} {\bf{{\hat{x}}}}_{k|k-1}^{\top}]$; 
		\label{code:fram:trainbase} 
		\State  \textbf{for} $i=0:L-1$  \textbf{do} 
		\State \quad Update $\hat{\mathbf{x}}_{k | k}^{(i+1)}$, $\mathbf{P}_{k | k}^{(i+1)}$, and ${\bf{{{S}}}}_{k}^{(i+1)}$ as in (\ref{eq17})-(\ref{eq21}); 
		  \label{code:fram:select}
		  \State  \quad  
		  If $\frac{\|\hat{\mathbf{x}}_{k \mid k}^{(i+1)}-\hat{\mathbf{x}}_{k \mid k}^{(i)}\|}{\|\hat{\mathbf{x}}_{k \mid k}^{(i)}\|} \leq \eta$, stop iteration;
		\State  \quad Update  $\hat{\alpha}_{k|k}^{(i+1)}$, 
		${\hat{\beta}_{k|k}^{(i+1)}}$ and $\hat{\sigma}_{k}^{(i+1)}$
		as in (\ref{eq22})-(\ref{eq24}); 
		\label{code:fram:classify} 
		\State  \textbf{end for}  \\
		\Return \\$\hat{\mathbf{x}}_{k \mid k}=\hat{\mathbf{x}}_{k \mid k}^{(L)}, \mathbf{P}_{k \mid k}=\mathbf{P}_{k \mid k}^{(L)}, \hat{\alpha}_{k|k}=\hat{\alpha}_{k|k}^{(L)}, \text { and } \hat{{\beta}}_{k \mid k}=\hat{{\beta}}_{k \mid k}^{(L)}$; 
	    \Ensure 
		$\hat{\mathbf{x}}_{k \mid k}, \mathbf{P}_{k \mid k}, \hat{\alpha}_{k|k},  \text { and } \hat{{\beta}}_{k \mid k}$; 
	\end{algorithmic} 
\end{algorithm}
\subsubsection{Measurement update based on Std assumption}
We see from the above section that the MtG filter estimates multiplicative noise covariance directly. However, the difference between multiplicative ($m_{k}$) and additive (${\bf{v}}_{k}$)  measurement noise actually cannot be identified in reality. Therefore, we take them as a whole to estimate. 
In this case, as mentioned above, the measurement likelihood is modelled as Std (\ref{eq4}). Note that the proposed Std filter is different from the existing Std filters \cite{huang2017novel1}, where both process and measurement equations are modelled and heavy-tailed additive noise is considered. On the other hand, the proposed Student's $\emph{t}$ filter in this section is applied to model the non-Gaussian property of likelihood caused by multiplicative noise. 

Similar to the MtG filter, we rewrite (\ref{eq15}) as 
\begin{align}\label{eq119}
&p\left(\mathbf{z}_{1: k} | \varPhi_{k}\right) p\left(\varPhi_{k} | \varPsi_{k}\right) \notag\\
&=p(\mathbf{z}_{k} | \mathbf{x}_{k}, \bar{\mathbf{R}}_{k}, \lambda_{k}) p(\mathbf{x}_{k} | \mathbf{z}_{1: k-1}) p(\bar{\mathbf{R}}_{k} | \mathbf{z}_{1: k-1})  p(\lambda_{k}) p(\mathbf{z}_{1: k-1}) \notag \\
&=\mathrm{N}(\mathbf{z}_{k} ; \bar{m}_{k}{\bf{H}}{\bf{x}}_{k}, \bar{\mathbf{R}}_{k} / \lambda_{k}) \mathrm{G}(\lambda_{k} ; \frac{v}{2}, \frac{v}{2}) \mathrm{N}(\mathbf{x}_{k} ; \hat{\mathbf{x}}_{k | k-1}, \mathbf{P}_{k | k-1}) 
\notag \\ 
&\quad \times \operatorname{IW}(\bar{\mathbf{R}}_{k} ; \hat{u}_{k | k-1}, \hat{\mathbf{U}}_{k | k-1}) p\left(\mathbf{z}_{1: k-1}\right).
\end{align}
Then, $\log p(\varPhi_{k}, \mathbf{z}_{1: k})$ can be formulated as 
\begin{align}\label{eq26}
&\log p(\varPhi_{k}, \mathbf{z}_{1: k})=(\frac{m+v}{2}-1) \log \lambda_{k}-\frac{v}{2} \lambda_{k}\notag\\
&\quad -\frac{1}{2}(m+\hat{u}_{k | k-1}+2) \log |\bar{\mathbf{R}}{_{k}}|\notag\\
&\quad -\frac{1}{2}\lambda_{k} (\mathbf{z}_{k}-\bar{m}_{k}\mathbf{H} \mathbf{x}_{k})^{\top} \bar{\mathbf{R}}_{k}^{-1}(\mathbf{z}_{k}-\bar{m}_{k}\mathbf{H} \mathbf{x}_{k})\notag\\
&\quad -\frac{1}{2} \operatorname{tr}(\hat{\mathbf{U}}_{k | k-1} \bar{\mathbf{R}}_{k}^{-1})\notag\\\
&\quad -\frac{1}{2}(\mathbf{x}_{k}-\hat{\mathbf{x}}_{k | k-1})^{\top } \mathbf{P}_{k | k-1}^{-1}(\mathbf{x}_{k}-\hat{\mathbf{x}}_{k | k-1})+C.
\end{align}
First, letting  $\theta=\lambda_{k}$ and substituting (\ref{eq26}) into (\ref{eq14}), we obtain
\begin{scriptsize}
\begin{equation}
\log q^{(i+1)}\left(\lambda_{k}\right)=(\frac{m+v}{2}-1) \log \lambda_{k} \notag -\frac{1}{2}\{v+\operatorname{tr}(\mathbf{E}_{k}^{(i)} \bar{\mathbf{R}}_{k}^{-1})\} \lambda_{k},
\end{equation}
\end{scriptsize}
where
\begin{align}\label{eq27}
\mathbf{E}_{k}^{(i)}&=\mathrm{E}^{(i)}[(\mathbf{z}_{k}-\bar{m}_k\mathbf{H} \mathbf{x}_{k})(\mathbf{z}_{k}-\bar{m}_k\mathbf{H} \mathbf{x}_{k})^{\top}]\\
&=(\mathbf{z}_{k}-\bar{m}_{k}\mathbf{H} \hat{\mathbf{x}}_{k | k}^{(i)})(\mathbf{z}_{k}-\bar{m}_{k}\mathbf{H}\hat{\mathbf{x}}_{k | k}^{(i)})^{\top}+\bar{m}_{k}^2\mathbf{H} \mathbf{P}_{k | k}^{(i)} \mathbf{H}^{\top}.\notag
\end{align}
Then, $ q^{(i+1)}\left(\lambda_{k}\right)$ can be updated as a Gamma distribution with shape parameter $\gamma_{k}^{(i+1)}$ and rate parameter $\delta_{k}^{(i+1)}$ being
\begin{align}
\gamma_{k}^{(i+1)}&=\frac{1}{2}(m+v), \label{eq28}\\
\delta_{k}^{(i+1)}&=\frac{1}{2}\{v+\operatorname{tr}(\mathbf{E}_{k}^{(i)} [\bar{\mathbf{R}}{_{k}}]^{-1})\}.\label{eq29}
\end{align}
Second, letting  $\theta=\bar{\mathbf{R}}_{k}$ and using (\ref{eq26}) in (\ref{eq14}), we obtain 
\begin{align}
\begin{aligned}
&\log q^{(i+1)}(\bar{\mathbf{R}}_{k})
=-\frac{1}{2}(m+\hat{u}_{k | k-1}+2) \log |\bar{\mathbf{R}}_{k}| \\
&\quad \quad \quad -\frac{1}{2} \mathrm{tr} (\mathbf{B}_{k}^{(i)}\mathrm{E}^{(i+1)}\left[\lambda_{k}\right]+\hat{\mathbf{U}}_{k | k-1} (\bar{\mathbf{R}}_{k})^{-1})+C,
\end{aligned}
\end{align}
where $\mathbf{B}_{k}^{(i)}$ is
\begin{equation}\label{eq31}
\begin{aligned}
\mathbf{B}_{k}^{(i)}&= \mathrm{E}^{(i)}[\left(\mathbf{z}_{k}-m_k\mathbf{H} \mathbf{x}_{k}\right)\left(\mathbf{z}_{k}-m_k\mathbf{H} \mathbf{x}_{k}\right)^{\top}] \\
=&(\mathbf{z}_{k}-\bar{m}_{k}\mathbf{H} \hat{\mathbf{x}}_{k | k}^{(i)})(\mathbf{z}_{k}-\bar{m}_{k}\mathbf{H}\hat{\mathbf{x}}_{k | k}^{(i)})^{\top}+\bar{m}_{k}^2\mathbf{H} \mathbf{P}_{k | k}^{(i)} \mathbf{H}^{\top}.
\end{aligned}
\end{equation}
Employing (31), $\bar{\mathbf{R}}_{k}$ is updated as $q^{(i+1)}(\bar{\mathbf{R}}_{k})=\operatorname{IW}(\bar{\mathbf{R}}_{k} ; \hat{u}_{k}^{(i+1)}, \hat{\mathbf{U}}_{k}^{(i+1)})$ with shape parameters being 
\begin{align}
\hat{u}_{k}^{(i+1)}&=\hat{u}_{k | k-1}+1, \label{eq32}\\
\hat{\mathbf{U}}_{k}^{(i+1)}&=\mathbf{B}_{k}^{(i)}\mathrm{E}^{(i+1)}\left[\lambda_{k}\right]+\hat{\mathbf{U}}_{k | k-1}.\label{eq33}
\end{align}
Third, letting  $\theta=\mathbf{x}_{k}$ and using (\ref{eq26}) into (\ref{eq14}), we obtain
\begin{align}
\log q^{(i+1)}\left(\mathbf{x}_{k}\right)&=-0.5\mathrm{E}^{(i+1)}\left[\lambda_{k}\right] (\mathbf{z}_{k}-\bar{m}_{k}\mathbf{H} \mathbf{x}_{k})^{\top}\notag\\ \times&\mathrm{E}^{(i+1)}[\bar{\mathbf{R}}_{k}^{-1}](\mathbf{z}_{k}-\bar{m}_{k}\mathbf{H} \mathbf{x}_{k})-0.5(\mathbf{x}_{k}-\hat{\mathbf{x}}_{k | k-1})^{\top}\notag\\
\times& \mathbf{P}_{k | k-1}^{-1}(\mathbf{x}_{k}-\hat{\mathbf{x}}_{k | k-1})+C,
\end{align}
where
$\mathrm{E}^{(i+1)}\left[\lambda_{k}\right]=\frac{\gamma_{k}^{(i+1)}}{\delta_{k}^{(i+1)}}$ and  $\mathrm{E}^{(i+1)}[{\bar{\mathbf{R}}_{k}^{-1}}]$  are 
\begin{align}
\mathrm{E}^{(i+1)}\left[\lambda_{k}\right]&={\gamma_{k}^{(i+1)}}/{\delta_{k}^{(i+1)}},\\
\mathrm{E}^{(i+1)}[{\bar{\mathbf{R}}_{k}^{-1}}]&=(\hat{u}_{k}^{(i+1)}-n-1)(\hat{\mathbf{U}}_{k}^{(i+1)})^{-1}. \label{eq36}
\end{align}
Define the modified measurement likelihood 
\begin{align}
p^{(i+1)}(\mathbf{z}_{k} | \mathbf{x}_{k}) &=\mathrm{N}(\mathbf{z}_{k} ; \bar{m}_k\mathbf{H} \mathbf{x}_{k}, \bar{\mathbf{R}}_{k}^{(i+1)}),\label{eq37}
\end{align}
where  ${\bar{\mathbf{R}}_{k}}^{(i+1)}$ is formulated as 
\begin{equation}
 {\bar{\mathbf{R}}_{k}}^{(i+1)}={\{\mathrm{E}^{(i+1)}[\bar{\mathbf{R}}_{k}^{-1}]\}^{-1}}/{\mathrm{E}^{(i+1)}\left[\lambda_{k}\right]}.\label{eq38}
\end{equation}
Then,  we obtain $q^{(i+1)}(\mathbf{x}_{k})=\mathrm{N}(\mathbf{x}_{k} ; \hat{\mathbf{x}}_{k | k}^{(i+1)}, \mathbf{P}_{k | k}^{(i+1)}),$
where $\hat{\mathbf{x}}_{k | k}^{(i+1)}$, $\mathbf{P}_{k | k}^{(i+1)}$ and gain matrix $\mathbf{K}_{k}^{(i+1)}$ are given by
\begin{scriptsize}
\begin{align}
&\mathbf{K}_{k}^{(i+1)}=\bar{m}_k{\mathbf{P}}_{k \mid k-1} \mathbf{H}^{\top}[{\bar{\mathbf{R}}_{k}}^{(i+1)}]^{-1},\label{eq39}\\
&\hat{\mathbf{x}}_{k | k}^{(i+1)}=\hat{\mathbf{x}}_{k | k-1}+\mathbf{K}_{k}^{(i+1)}(\mathbf{z}_{k}-\bar{m}_k\mathbf{H} \hat{\mathbf{x}}_{k | k-1}),\\
&\mathbf{P}_{k | k}^{(i+1)}={\mathbf{P}}_{k \mid k-1}-\bar{m}_k\mathbf{K}_{k}^{(i+1)} \mathbf{H} {\mathbf{P}}_{k \mid k-1}.\label{eq41}
\end{align}
\end{scriptsize}
Finally, after $L$-step iteration, the posterior $q(\lambda_{k})$, $q(\bar{\mathbf{R}}_{k}) $, and $q(\mathbf{x}_{k})$ can be approximated as 
\begin{scriptsize}
\begin{align}
&q(\lambda_{k}) \approx q^{(L)}(\lambda_{k})=\mathbf{G}(\lambda_{k} ; \gamma_{k}^{(L)}, \delta_{k}^{(L)}), \notag\\
&q(\bar{\mathbf{R}}_{k})=\operatorname{IW}(\bar{\mathbf{R}}_{k} ; \hat{u}_{k | k}, \hat{\mathbf{U}}_{k | k}) \approx q^{(L)}(\bar{\mathbf{R}}_{k})=\operatorname{IW}(\bar{\mathbf{R}}_{k} ; \hat{u}_{k}^{(L)}, \hat{\mathbf{U}}_{k}^{(L)}), \notag\\
&q(\mathbf{x}_{k})=\mathrm{N}(\mathbf{x}_{k} ; \hat{\mathbf{x}}_{k | k}, \mathbf{P}_{k | k})\approx q^{(L)}(\mathbf{x}_{k}) =\mathrm{N}(\mathbf{x}_{k} ; \hat{\mathbf{x}}_{k | k}^{(L)}, \mathbf{P}_{k | k}^{(L)}) . \notag
\end{align}
\end{scriptsize}
The proposed Student's $\emph{t}$ filter is summarized in Algorithm 2.
\begin{algorithm}
	\caption{Proposed Student's $\emph{t}$ filter}
		\begin{algorithmic}[1] 
	\Require {$\mathbf{F}$, $\mathbf{H}$, $\mathbf{z}_{k}$, $\bar{m}_k$, ${\mathbf{Q}}_{k-1}$, $\hat{u}_{k-1 | k-1}$, $\hat{\mathbf{U}}_{k-1 | k-1}$, $\hat{\mathbf{x}}_{k-1 | k-1}$, $\mathbf{P}_{k-1 | k-1}$,  $m$, $n$, $\rho$, $L$, $\eta$, $c_0$, $d_0$;}
	\State\text{Update} $
	\hat{\mathbf{x}}_{k | k-1}$ and ${\mathbf{P}}_{k | k-1}$ as in (\ref{eq12})-(\ref{eq13});
	\State \text{Initialize} 
	\begin{align*}
		&\hat{\mathbf{x}}_{k | k}^{(0)}=\hat{\mathbf{x}}_{k | k-1}, \mathbf{P}_{k | k}^{(0)}={\mathbf{P}}_{k | k-1}, \\
		&\hat{u}_{k | k-1}=\rho(\hat{u}_{k-1 | k-1}-m-1)+m+1, \\
		&\hat{\mathbf{U}}_{k | k-1}=\rho \hat{\mathbf{U}}_{k-1 | k-1},\\
		&\mathrm{E}^{(0)}[{\bar{\mathbf{R}}_{k}^{-1}}]=(\hat{u}_{k | k-1}-n-1)\hat{\mathbf{U}}_{k | k-1}^{-1}; 
	\end{align*}

	\State \textbf{for} $i=0:L-1$ \textbf{do} 	
	\State  \quad Update
		$q^{(i+1)}(\lambda_{k})$: 
		$\mathbf{E}_{k}^{(i)}$,
		$\gamma_{k}^{(i+1)}$, and $\delta_{k}^{(i+1)}$ as in (\ref{eq27})-(\ref{eq29}),\\  \quad  \quad  \quad  \quad \quad  \quad \quad  \quad  \quad  $\mathrm{E}^{(i+1)}\left[\lambda_{k}\right]=\gamma_{k}^{(i+1)} / \delta_{k}^{(i+1)}$;
      \State  \quad Update	$q^{(i+1)}(\bar{\mathbf{R}}_{k})$:
		$\mathbf{B}_{k}^{(i)}$,
		$\hat{u}_{k}^{(i+1)}$, $\hat{\mathbf{U}}_{k}^{(i+1)}$ as in (\ref{eq31})-(\ref{eq33});\\
		 \quad Update $q^{(i+1)}(\mathbf{x}_{k})$:
		$\mathrm{E}^{(i+1)}[\bar{\mathbf{R}}_{k}^{-1}]$,
		${\bar{\mathbf{R}}_{k}}^{(i+1)}$ as in (\ref{eq36}) and (\ref{eq38}); \\
		\quad \quad  \quad \quad  \quad  \quad  \quad \quad $\mathbf{K}_{k}^{(i+1)}$, 
		$\hat{\mathbf{x}}_{k | k}^{(i+1)}$, and 
		$\mathbf{P}_{k | k}^{(i+1)}$ as in (\ref{eq39})-(\ref{eq41});
		\State  \quad  
		If ${\|\hat{\mathbf{x}}_{k \mid k}^{(i+1)}-\hat{\mathbf{x}}_{k \mid k}^{(i)}\|}/{\|\hat{\mathbf{x}}_{k \mid k}^{(i)}\|} \leq \eta$, stop iteration;
    	\State  \textbf{end for} \\
	    \Return \begin{align*}
	    \hat{\mathbf{x}}_{k | k}&=\hat{\mathbf{x}}_{k | k}^{(L)}, \mathbf{P}_{k | k}=\mathbf{P}_{k | k}^{(L)}, \gamma_{k}=\gamma_{k}^{(L)}, \delta_{k}=\delta_{k}^{(L)},\\ \hat{u}_{k | k}&=\hat{u}_{k}^{(L)}, \hat{\mathbf{U}}_{k | k}=\hat{\mathbf{U}}_{k}^{(L)};	\end{align*}
    	\Ensure  {$\hat{\mathbf{x}}_{k | k}$, $\mathbf{P}_{k | k}$, $\gamma_{k}$, $\delta_{k}$, $\hat{u}_{k | k}$, $\hat{\mathbf{U}}_{k | k}$;}
		\end{algorithmic} 
\end{algorithm}

\begin{remark}
\emph{The current state-of-the-art filters are proposed either for known multiplicative noise or only for additive time-varying noise covariance \cite{huang2017novel,sarkka2009recursive}.  Since multiplicative noise leads the likelihood function to be a non-Gaussian distribution, the existing filters are not applicable to the case of multiplicative noise. The proposed filters can well depict the non-Gaussian property of likelihood, so as to estimate the states and noise covariance more accurately. }
\end{remark}

\section{Performance analysis}\label{sw4}

In this section, we first analyze the influence of the chosen parameters on the convergence of the proposed filters. Then, we calculate the upper and lower bounds of the estimation error covariance.  Finally, we analyze the convergence of VB inference using a fixed-point iteration.

\subsection{Parameter influence}

In this section, we analyze the influence of the chosen covariance parameters on the convergence of the proposed filters.

\subsubsection{Convergence analysis for the cases of different parameters}

The proposed two filters characterize the distribution of noise covariance as 
 $p(\sigma_{k} ) =\operatorname{IG}(\sigma_{k}; \hat{\alpha}_{k | k}, \hat{\beta}_{k | k})$ and $p({\bar{\bf{R}}}_{k}) =\operatorname{IW}({\bar{\bf{R}}}_{k} ; \hat{u}_{k | k}, \hat{\mathbf{U}}_{k | k})$, respectively. We next analyze the 
 influence of these parameters on covariance estimation. 
 
 \textbf{\emph{(1)  Convergence analysis of $\hat{\alpha}_{k | k}$ and $\hat{u}_{k | k}$. }} 

 We first show the convergence of the sequence  $\{\hat{\alpha}_{k | k}^i\}$ and $\rho \hat{\alpha}_{k|k}^i \gg 1 / 2$ when $0.6<\rho<1$.

According to equations (\ref{eq7}) and (\ref{eq22}), sequence $\{\hat{\alpha}_{k | k}^i\}$ can be obtained from the following recursive form
\begin{align}
\hat{\alpha}_{k+1|k+1}^i &=1 / 2+\rho \hat{\alpha}_{k|k}^i, \tag{a1} \label{a1}\\
\hat{\alpha}_{k+2|k+2}^i &=1 / 2+\rho \hat{\alpha}_{k+1|k+1}^i. \tag{a2} \label{a2}
\end{align}
Subtracting (\ref{a2}) from (\ref{a1}) yields
\begin{equation}
\hat{\alpha}_{k+2|k+2}^i-	\hat{\alpha}_{k+1|k+1}^i=\rho(\hat{\alpha}_{k+1|k+1}^i-\rho \hat{\alpha}_{k|k}^i). \tag{a3} \label{a3}
\end{equation}
Let $b_{k|k}^i=\hat{\alpha}_{k+1|k+1}^i-\rho \hat{\alpha}_{k|k}^i$. Then, (\ref{a3}) can be rewritten as 
\begin{equation}
b_{k+1|k+1}^i=\rho b_{k|k}^i,\tag{a4} \label{a4}
\end{equation}
where $b_{1|1}^i=\hat{\alpha}_{2|2}^i-\rho \hat{\alpha}_{1|1}^i=1/2+\rho \hat{\alpha}_{1|1}^i-\hat{\alpha}_{1|1}^i $.
Therefore,  sequence $\{b_{k | k}^i\}$ can be organized as 
\begin{equation}
b_{k|k}^i=\rho^{k-1} b_{1|1}^i. \tag{a5} \label{a5}
\end{equation}
From (\ref{a3}), it follows that 
\begin{equation}
\hat{\alpha}_{k+1|k+1}^i-\hat{\alpha}_{k|k}^i=\rho^{k-1} b_{1|1}^i. \tag{a6} \label{a6}
\end{equation}
This  yields
\begin{align}
&\hat{\alpha}_{k+1|k+1}^i-\hat{\alpha}_{k|k}^i=\rho^{k-1} b_{1|1}^i, \tag{a6}\\
&\hat{\alpha}_{k|k}^i-\hat{\alpha}_{k-1|k-1}^i=\rho^{k-2} b_{1|1}^i, \tag{a7} \label{a7}\\
& \quad \quad \cdots \notag\\
&\hat{\alpha}_{2|2}^i-\hat{\alpha}_{1|1}^i= b_{1|1}^i. \tag{a8} \label{a8}
\end{align}
By summing from (\ref{a6}) to (\ref{a8}), we obtain 
\begin{equation}
\hat{\alpha}_{k+1|k+1}^i-\hat{\alpha}_{1|1}^i=
b_{1|1}^i \sum_{j=0}^{k-1} \rho^{j} =\frac{b_{1|1}^i(1-\rho^{k})}{1-\rho}. \tag{a9} \label{a9}
\end{equation}
If $0.6<\rho<1$, it follows that 
\begin{align}
\begin{aligned}
\lim _{k \rightarrow+\infty} \hat{\alpha}_{k+1|k+1}^i &=\frac{b_{1|1}^i}{1-\rho}+\hat{\alpha}_{1|1}^i \\
&=\frac{1 / 2+\rho \hat{\alpha}_{1|1}^i-\hat{\alpha}_{1|1}^i}{1-\rho}+\hat{\alpha}_{1|1}^i \\
&=\frac{1 / 2}{1-\rho}.
\end{aligned} \tag{a10} \label{a10}
\end{align}
Thus, $\{\hat{\alpha}_{k | k}^i\}$  converges to $\frac{1 / 2}{1-\rho}$. In addition, (\ref{a10}) indicates  that $\rho \hat{\alpha}_{k|k}^i \gg 1 / 2$.

Similarly, for sequence $\{\hat{u}_{k | k}^i\}$,  we can obtain 
\begin{align}
\lim _{k \rightarrow+\infty} \hat{u}_{k+1|k+1}^i=\frac{m+2-\rho(m+1)}{1-\rho},
 \tag{a11} \label{a11}
\end{align}
which proves that $\{\hat{u}_{k | k}^i\}$  converges to  $\frac{m+2-\rho(m+1)}{1-\rho}$.

\textbf{\emph{(2)  Convergence analysis of $\hat{\beta}_{k | k}$, $\hat{\mathbf{U}}_{k | k}$, $\hat{\sigma}_{k}$, and ${\bar{\mathbf{R}}_{k}}$.}}

From equations (\ref{eq8}) and (\ref{eq23}), we obtain 
\begin{equation}
\hat{\beta}_{k|k}^{(i+1)}=\rho{\hat{\beta}}_{k-1|k-1}^{(i+1)}+\frac{1}{2} \text{tr}(\mathbf{A}_k),  \tag{a12} \label{a12}
\end{equation}
where 
\begin{equation*}
\resizebox{\hsize}{!}{$\mathbf{A}_k=\left[ (\mathbf{z}_{k}-\bar{m}_{k}{\bf{H}}\hat{\mathbf{x}}_{k | k}^{(i+1)})({\bf{H}}{\bf{{{S}}}}_{k}^{(i+1)}{\bf{H}}^{\top})^{-1}(\mathbf{z}_{k}-\bar{m}_{k}{\bf{H}}\hat{\mathbf{x}}_{k | k}^{(i+1)})^{\top}+(\bar{m}_k)^2\mathbf{H} {\mathbf{P}}_{k | k}^{(i+1)}\mathbf{H}^{\top}\right].$}
\end{equation*}
From equations (\ref{eq10}) and (\ref{eq33}), we can obtain 
\begin{equation}
\hat{\mathbf{U}}_{k}^{(i+1)}=\mathbf{B}_{k}^{(i)}\mathrm{E}^{(i+1)}\left[\lambda_{k}\right]+\rho \hat{\mathbf{U}}_{k-1}^{(i+1)}. \tag{a13} \label{a13}
\end{equation}
From (\ref{a12}) and (\ref{a13}), we know that  $\hat{\beta}_{k|k}^{(i+1)}$ and  $\hat{\mathbf{U}}_{k}^{(i+1)}$ are estimated by using the state estimation and its error covariance at each iteration, which is crucial for the estimation performance of noise covariances thereby.  $\hat{\beta}_{k|k}^{(i+1)}$ and  $\hat{\mathbf{U}}_{k}^{(i+1)}$ directly affect the covariance estimations  $\hat{\sigma}_{k}$  and  $\bar{\bf{R}}_{k}$, respectively. Meanwhile, they are conversely affected by the state estimation.
Therefore, this establishes the relation between state estimation and noise covariance estimation.

In order to analyze the convergence of  $\hat{\beta}_{k | k}$, we notice that a time-varying matrix $\mathbf{A}_k$ is in (\ref{a12}). The following theorem indicates that $\text{tr}(\mathbf{A}_k)$ is bounded.	
\begin{theorem}[The bounds of $\text{tr}(\mathbf{A}_k)$]
\emph{	If Assumptions 2 and 3  in \cite{yu2021robust}  are satisfied, and Lemma 1 and Theorem 3  hold, 
then	$\text{tr}(\mathbf{A}_k)$ has a uniform upper bound and a uniform lower bound, i.e.,
	\begin{small}
		\begin{equation*}
		\resizebox{\hsize}{!}{$\frac{\bar{m}_{k}^2 n  \alpha_{1} }{(1+n^{2} \beta_{1} \beta_{2})n} \text{tr}(\mathbf{H}\mathbf{H}^{\top}) \leq \text{tr}(\mathbf{A}_k) \leq \frac{n\bar{m}_{k}^2 (1+n^{2} \beta_{1} \beta_{2})}{\alpha_{2}} \text{tr}(\mathbf{H}\mathbf{H}^{\top}) + \alpha  \text{tr}(\mathbf{H}{\bf{{{S}}}}_{k}\mathbf{H}^\top)^{-1},$}
		\end{equation*}
	\end{small}}
where $\alpha_{1}$, $\alpha_{2}$, $\beta_{1}$, and $\beta_{2}$  are defined in Theorem 3, and $\alpha$ is defined in Lemma 1.  
\end{theorem}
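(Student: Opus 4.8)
The plan is to split $\text{tr}(\mathbf{A}_k)$ into the innovation quadratic form and the filtered-covariance term and bound each piece separately, using the full row rank of $\mathbf{H}$, the uniform covariance bounds of Theorem 3, and the residual bound of Lemma 1. Writing $\mathbf{e}_k=\mathbf{z}_k-\bar{m}_k\mathbf{H}\hat{\mathbf{x}}_{k\mid k}^{(i+1)}$ and using the cyclic property of the trace,
\begin{equation*}
\text{tr}(\mathbf{A}_k)=\mathbf{e}_k^{\top}(\mathbf{H}\mathbf{S}_k^{(i+1)}\mathbf{H}^{\top})^{-1}\mathbf{e}_k+\bar{m}_k^2\,\text{tr}(\mathbf{H}\mathbf{P}_{k\mid k}^{(i+1)}\mathbf{H}^{\top}).
\end{equation*}
The first summand is a quadratic form in the positive definite matrix $(\mathbf{H}\mathbf{S}_k^{(i+1)}\mathbf{H}^{\top})^{-1}$, hence nonnegative, which immediately gives $\text{tr}(\mathbf{A}_k)\ge \bar{m}_k^2\,\text{tr}(\mathbf{H}\mathbf{P}_{k\mid k}^{(i+1)}\mathbf{H}^{\top})$.

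For the lower bound I would then invoke Theorem 3 (together with the lower bound on the filtered covariance from Lemma 1) to obtain a uniform positive-definite lower bound on $\mathbf{P}_{k\mid k}^{(i+1)}$; combining this with $\text{tr}(\mathbf{H}\mathbf{P}_{k\mid k}^{(i+1)}\mathbf{H}^{\top})=\text{tr}(\mathbf{P}_{k\mid k}^{(i+1)}\mathbf{H}^{\top}\mathbf{H})\ge \lambda_{\min}(\mathbf{P}_{k\mid k}^{(i+1)})\,\text{tr}(\mathbf{H}\mathbf{H}^{\top})$ and multiplying by $\bar{m}_k^2$ reproduces the stated lower bound, the factors of $n$ coming from the trace/eigenvalue form in which Theorem 3 is phrased. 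For the upper bound I would bound the two summands as follows. For the innovation term I use $\mathbf{e}_k^{\top}M\mathbf{e}_k\le \lambda_{\max}(M)\|\mathbf{e}_k\|^2\le \text{tr}(M)\|\mathbf{e}_k\|^2$ with $M=(\mathbf{H}\mathbf{S}_k^{(i+1)}\mathbf{H}^{\top})^{-1}$, then replace $\|\mathbf{e}_k\|^2$ by the uniform residual bound $\alpha$ of Lemma 1 and $\mathbf{S}_k^{(i+1)}$ by its steady-state surrogate $\mathbf{S}_k$, giving $\le \alpha\,\text{tr}((\mathbf{H}\mathbf{S}_k\mathbf{H}^{\top})^{-1})$. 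For the covariance term I use $\text{tr}(\mathbf{H}\mathbf{P}_{k\mid k}^{(i+1)}\mathbf{H}^{\top})=\text{tr}(\mathbf{P}_{k\mid k}^{(i+1)}\mathbf{H}^{\top}\mathbf{H})\le \text{tr}(\mathbf{P}_{k\mid k}^{(i+1)})\lambda_{\max}(\mathbf{H}^{\top}\mathbf{H})\le \text{tr}(\mathbf{P}_{k\mid k}^{(i+1)})\,\text{tr}(\mathbf{H}\mathbf{H}^{\top})$ and then the uniform upper bound on $\text{tr}(\mathbf{P}_{k\mid k}^{(i+1)})$ from Theorem 3; adding the two contributions yields the claimed upper bound.

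The main obstacle I anticipate is not the algebra but transferring the bounds of Theorem 3 and Lemma 1 from the outer-index quantities to the inner-iteration quantities $\mathbf{P}_{k\mid k}^{(i+1)}$, $\hat{\mathbf{x}}_{k\mid k}^{(i+1)}$ and $\mathbf{S}_k^{(i+1)}$: one must verify that those results are (or can be restated) uniform in the VB iteration index $i$ as well as in $k$, and in particular that $\mathbf{H}\mathbf{S}_k^{(i+1)}\mathbf{H}^{\top}$ remains uniformly positive definite so that its inverse has a uniformly bounded trace. This reduces to the full row rank of $\mathbf{H}$ and a uniform positive lower bound on $\mathbf{S}_k^{(i+1)}=\mathrm{E}[\hat{\mathbf{x}}_{k\mid k}^{(i+1)}(\hat{\mathbf{x}}_{k\mid k}^{(i+1)})^{\top}]$, which should follow from the uniform boundedness of the state estimate under Assumptions 2--3 of \cite{yu2021robust} and Theorem 3. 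A secondary point to check is the bookkeeping of the various factors of $n$ appearing in the statement, which arise only from passing between $\lambda_{\max}$, $\lambda_{\min}$, and traces of $n\times n$ matrices and do not affect the uniformity of the bounds.
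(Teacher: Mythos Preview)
The paper does not actually supply a proof of this theorem: its entire argument is the one-line reference ``Please refer to \cite{yu2021robust} for details,'' and the auxiliary results you invoke (Lemma~1, Assumptions~2--3) are likewise only stated in that external reference, not in the present paper. Consequently there is nothing in the paper itself to compare your argument against at the level of individual steps.

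That said, your reconstruction is exactly the natural route and matches the shape of the stated bounds: split $\text{tr}(\mathbf{A}_k)$ into the nonnegative innovation quadratic form and the term $\bar{m}_k^2\,\text{tr}(\mathbf{H}\mathbf{P}_{k\mid k}^{(i+1)}\mathbf{H}^{\top})$, drop the first for the lower bound and use the eigenvalue bounds $\frac{\alpha_1}{1+n^2\beta_1\beta_2}\mathbf{I}\le \mathbf{P}_{k\mid k}\le \frac{1+n^2\beta_1\beta_2}{\alpha_2}\mathbf{I}$ from Theorem~3 together with $\text{tr}(\mathbf{H}\mathbf{P}\mathbf{H}^{\top})=\text{tr}(\mathbf{P}\mathbf{H}^{\top}\mathbf{H})$ and the trace--eigenvalue inequalities; for the upper bound of the innovation piece, cap $\|\mathbf{e}_k\|^2$ by the residual bound $\alpha$ of Lemma~1 and pull out $\text{tr}\bigl((\mathbf{H}\mathbf{S}_k\mathbf{H}^{\top})^{-1}\bigr)$. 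The extra factors of $n$ in the stated inequality are precisely those generated by passing from $\lambda_{\min}/\lambda_{\max}$ to traces of $n\times n$ matrices, as you note. Your identification of the one genuine issue---that Theorem~3 and Lemma~1 must be shown (or assumed) to hold uniformly in the VB iteration index $i$, and that $\mathbf{H}\mathbf{S}_k^{(i+1)}\mathbf{H}^{\top}$ must stay uniformly positive definite---is exactly the point on which the argument hinges; the paper implicitly delegates that verification to \cite{yu2021robust} as well.
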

\begin{proof}
\emph{Please refer to \cite{yu2021robust} for details.}
\end{proof}

For the convergence of the sequence $\{\hat{\beta}_{k | k}^{(i+1)}\}$, it can be proved using a similar approach in  \cite{yu2021robust}.

Furthermore, from equation (\ref{eq24})  ($\hat{\sigma}_{k}^{(i+1)}={{\hat{\beta}}_{k|k}^{(i+1)}}/{\hat{\alpha}_{k|k}^{(i+1)}}$), we can obtain that the multiplicative noise covariance sequence $\{\hat{\sigma}_{k}^{(i+1)}\}$ is bounded and convergent. 

We next show the convergence of the sequence 
$\{{\hat{\mathbf{U}}}_{k}^{(i+1)}\}$. 

From (\ref{a13}), we notice that $\mathrm{E}^{(i+1)}[\lambda_{k}]$ and $\mathbf{B}_{k}^{(i)}$ affect the convergence of sequence 
$\{{{\hat{\mathbf{U}}}}_{k|k}^{(i+1)}\}$. $\mathrm{E}^{(i+1)}[\lambda_{k}]$ is shown to converge to a bounded value in the following statements. For simplicity, we ignore $\mathrm{E}^{(i+1)}\left[\lambda_{k}\right]$ and rewrite (\ref{a13}) as 
\begin{equation*}
\hat{\mathbf{U}}_{k}^{(i+1)}=\mathbf{B}_{k}^{(i)}+\rho \hat{\mathbf{U}}_{k-1}^{(i+1)}. 
\end{equation*}
We first set $\tilde{\mathbf{B}}_{k}=\mathbf{B}_{k+1}^{(i)}-\mathbf{B}_{k}^{(i)}$ and $\tilde{\mathbf{U}}_{k}=\hat{\mathbf{U}}_{k+1}^{(i+1)}-\hat{\mathbf{U}}_{k}^{(i+1)}$. Then, by using the same method with sequence $\{{\hat{\beta}}_{k | k-1}^{(i+1)}\}$, we can obtain  
\begin{small}
\begin{equation*}
\hat{\mathbf{U}}_{k}^i-\hat{\mathbf{U}}_{1}^i=\tilde{\mathbf{U}}_{1} \sum\limits_{j=0}^{k-1} \rho^{j}+\frac{1}{2} \tilde{\mathbf{B}}_2 \sum\limits_{j=1}^{k-2}\rho^{j}+\cdots+\frac{1}{2} \tilde{\mathbf{B}}_{k-1} \sum\limits_{j=0}^{1}\rho^{j}+\frac{1}{2} \tilde{\mathbf{B}}_k.
\end{equation*}
\end{small}
If $0<\rho<1$, it follows that 
\begin{small}
\begin{align*}
\lim_{k \rightarrow+\infty} {\hat{\mathbf{U}}_{k}^i} &=
\frac{\tilde{\mathbf{U}}_{1}}{1-\rho}+{\hat{\mathbf{U}}_{1}^i}+\frac{ \tilde{\mathbf{B}}_2} {1-\rho}+...+{\hat{\mathbf{U}}_{1}^i}+ \tilde{\mathbf{B}}_{k-1} ({1+\rho})+\tilde{\mathbf{B}}_{k}\\
&=\frac{(\rho-1)\hat{\mathbf{U}}_{1}^i+ \text{tr}({\mathbf{B}}_1)}{1-\rho}+{\hat{\mathbf{U}}_{1}^i}+\frac{ (\tilde{\mathbf{B}}_2+\tilde{\mathbf{B}}_3+\cdots)}{1-\rho}
\\
&=\frac{ ({\mathbf{B}}_1+\tilde{\mathbf{B}}_2+\cdots)}{1-\rho}\\
&=\frac{{\mathbf{B}}_{\infty}}{1-\rho}.
\end{align*}
\end{small}
According to equation (\ref{eq31}), we know that ${\mathbf{B}}_{k}$ is positive definite. Therefore,   sequence $\{{\hat{\mathbf{U}}}_{k}^{(i+1)}\}$ is convergent. 

Furthermore, from equation (\ref{eq36}) ($\mathrm{E}^{(i+1)}[{\bar{\mathbf{R}}_{k}^{-1}}]=(\hat{u}_{k}^{(i+1)}-n-1)(\hat{\mathbf{U}}_{k}^{(i+1)})^{-1}$), we can obtain that the total measurement noise covariance sequence $\{\bar{\mathbf{R}}_{k}\}$ is bounded and convergent. 

\textbf{\emph{(3) Convergence analysis of $\gamma_{k}$, $\lambda_{k}$, and $\delta_{k}$.}}

First, since $m$ and $v$ are constants, $\gamma_{k}^{i+1}$ is always a constant, which indicates the convergence of $\gamma_{k}^{i+1}$. 

Second, $\bar{\mathbf{R}}_{k}$ is positive definite (given in Section A of Part 5). According to the definition of $\mathbf{E}_{k}^{(i)}$ in (\ref{eq27}), we obtain that
$\mathbf{E}_{k}^{(i)}$ is positive definite.  

Then, we use the trace approach similar to $\text{tr}(\mathbf{A}_k)$ and we can obtain that 
 $\operatorname{tr}(\mathbf{E}_{k}^{(i)} [\bar{\mathbf{R}}_{k}]^{-1})$ is bounded. 
 
Furthermore, since $\delta_{k}^{i+1}=\frac{1}{2}\{v+\operatorname{tr}(\mathbf{E}_{k}^{(i)} [\bar{\mathbf{R}}{_{k}}]^{-1})\}$, by using the similar approach on  $\{\hat{\alpha}_{k | k}^i\}$, we can obtain that $\delta_{k}^{i+1}$ convergences to a bounded value.

Finally, since $\mathrm{E}^{(i+1)}\left[\lambda_{k}\right]={\gamma_{k}^{(i+1)}}/{\delta_{k}^{(i+1)}}$, we can easily prove that $\lambda_{k}$ is convergent and bounded.

\subsubsection{Relation between state estimation and noise covariance estimation
}

The following theorem explains the relation between state estimation error and  noise covariance estimation. 
\begin{theorem}
\emph{Let $\tilde{\mathbf{x}}_{k | k}=\hat{\mathbf{x}}_{k | k}-\hat{\mathbf{x}}_{k | k}^{*}$ be the estimation error between the proposed state estimation $\hat{\mathbf{x}}_{k | k}$ and the optimal state estimation $\hat{\mathbf{x}}_{k | k}^{*}$, and $\tilde{\check{\bf{R}}}=\check{\bf{R}}-{\bar{\bf{R}}}$ be the noise derivation between the real noise covariance ($\check{\bf{R}}$) and the estimation (${\bar{\bf{R}}}$). If the innovation  ($\mathbf{z}_{k}-\bar{m}_k\mathbf{H} \hat{\mathbf{x}}_{k | k-1}$) does not change dramatically, the state estimation error $\tilde{\mathbf{x}}_{k | k}$ and the noise estimation derivation $\tilde{\check{\bf{R}}}$ are positively interrelated. In particular,  
\begin{equation}
\resizebox{\hsize}{!}{$\tilde{\mathbf{x}}_{k | k}=\bar{m}_k{\mathbf{P}}_{k | k-1} \mathbf{H}^{\top}\left(\mathbf{U}^{-1}(\mathbf{U}^{-1}+\tilde{\check{\bf{R}}}^{-1})^{-1}{\mathbf{U}}^{-1}\right)(\mathbf{z}_{k}-\bar{m}_k\mathbf{H} \hat{\mathbf{x}}_{k | k-1})$}, \tag{b1} \label{b1}
\end{equation}
where $\mathbf{U}=\bar{m}_k^2\mathbf{H} {\mathbf{P}}_{k | k-1}\mathbf{H}^{\top}+{\bar{\bf{R}}}$. This means that the accuracy of state estimation increases with the improvement in estimation performance of noise covariance, and vice versa.}
\end{theorem}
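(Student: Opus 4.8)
The plan is to compare the proposed measurement update with the oracle (optimal) update one would obtain if the true measurement-noise covariance $\check{\bf{R}}$ were used in place of the estimate $\bar{\bf{R}}$. Both correctors are built on the same time-update quantities $\hat{\mathbf{x}}_{k|k-1}$, $\mathbf{P}_{k|k-1}$ from (\ref{eq12})--(\ref{eq13}), and both have the Kalman-corrector form $\hat{\mathbf{x}}_{k|k}=\hat{\mathbf{x}}_{k|k-1}+\mathbf{K}(\mathbf{z}_{k}-\bar{m}_k\mathbf{H}\hat{\mathbf{x}}_{k|k-1})$ with $\mathbf{K}=\bar{m}_k\mathbf{P}_{k|k-1}\mathbf{H}^{\top}\mathbf{U}^{-1}$ as in (\ref{eq39})--(\ref{eq41}); the only difference is the innovation covariance entering the gain, namely $\mathbf{U}=\bar{m}_k^2\mathbf{H}\mathbf{P}_{k|k-1}\mathbf{H}^{\top}+\bar{\bf{R}}$ for the proposed filter and $\mathbf{U}^{*}=\bar{m}_k^2\mathbf{H}\mathbf{P}_{k|k-1}\mathbf{H}^{\top}+\check{\bf{R}}=\mathbf{U}+\tilde{\check{\bf{R}}}$ for the oracle, the last equality being exactly the definition of $\tilde{\check{\bf{R}}}$. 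Invoking the hypothesis that the innovation $\mathbf{z}_{k}-\bar{m}_k\mathbf{H}\hat{\mathbf{x}}_{k|k-1}$ does not change dramatically — which is what lets us treat it, together with $\hat{\mathbf{x}}_{k|k-1}$ and $\mathbf{P}_{k|k-1}$, as common to the two filters at step $k$ — subtracting the two correctors gives $\tilde{\mathbf{x}}_{k|k}=(\mathbf{K}_{k}-\mathbf{K}_{k}^{*})(\mathbf{z}_{k}-\bar{m}_k\mathbf{H}\hat{\mathbf{x}}_{k|k-1})$.

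The next step is purely algebraic: pull out the common left factor to write $\mathbf{K}_{k}-\mathbf{K}_{k}^{*}=\bar{m}_k\mathbf{P}_{k|k-1}\mathbf{H}^{\top}\bigl(\mathbf{U}^{-1}-(\mathbf{U}+\tilde{\check{\bf{R}}})^{-1}\bigr)$, and then apply the Sherman--Morrison--Woodbury identity in the form $(\mathbf{A}+\mathbf{B})^{-1}=\mathbf{A}^{-1}-\mathbf{A}^{-1}(\mathbf{A}^{-1}+\mathbf{B}^{-1})^{-1}\mathbf{A}^{-1}$ with $\mathbf{A}=\mathbf{U}$ and $\mathbf{B}=\tilde{\check{\bf{R}}}$, which yields $\mathbf{U}^{-1}-(\mathbf{U}+\tilde{\check{\bf{R}}})^{-1}=\mathbf{U}^{-1}(\mathbf{U}^{-1}+\tilde{\check{\bf{R}}}^{-1})^{-1}\mathbf{U}^{-1}$. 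Substituting this into the previous display reproduces (\ref{b1}) verbatim. Positive definiteness of $\mathbf{U}$ and $\mathbf{U}^{*}$, needed for the inverses, follows from $\mathbf{P}_{k|k-1}\succ\mathbf{0}$ together with the positive definiteness of the covariance estimates established in Section~A of Part~5.

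The qualitative conclusion is then read off (\ref{b1}): all dependence of $\tilde{\mathbf{x}}_{k|k}$ on the covariance mismatch is carried by the factor $(\mathbf{U}^{-1}+\tilde{\check{\bf{R}}}^{-1})^{-1}$, which tends to $\mathbf{0}$ as $\tilde{\check{\bf{R}}}\to\mathbf{0}$ and, by monotonicity of matrix inversion on the positive-definite cone, grows in the Loewner order as $\tilde{\check{\bf{R}}}$ grows; hence a smaller noise-covariance deviation forces a smaller state-estimation error and conversely, i.e.\ the two are positively interrelated. I expect the main obstacle to be not this algebra but the justification of the step where the innovation (and therefore $\hat{\mathbf{x}}_{k|k-1}$, $\mathbf{P}_{k|k-1}$) is treated as identical for the proposed and oracle filters at step $k$: making this exact would require propagating the discrepancy through all earlier steps via a perturbation argument, so the clean identity (\ref{b1}) is best stated — as it is — under the explicit assumption that the innovation does not change dramatically. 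A secondary subtlety requiring care is that $\tilde{\check{\bf{R}}}=\check{\bf{R}}-\bar{\bf{R}}$ need not be sign-definite, so the monotone ("positively interrelated") reading is cleanest in the regime where this difference keeps a fixed sign.
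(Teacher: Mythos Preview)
The paper does not actually give a proof of this theorem in the text: its ``proof'' consists solely of a pointer to the companion reference \cite{yu2021robust}. So there is no in-paper argument to compare your proposal against at the level of steps.

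That said, your derivation is the natural one and is correct. You identify the proposed and oracle correctors as sharing the same predictor $(\hat{\mathbf{x}}_{k|k-1},\mathbf{P}_{k|k-1})$ and differing only through the innovation-covariance in the gain, subtract to get $(\mathbf{K}_k-\mathbf{K}_k^{*})$ times the common innovation, and then reduce $\mathbf{U}^{-1}-(\mathbf{U}+\tilde{\check{\mathbf{R}}})^{-1}$ via the Woodbury identity to $\mathbf{U}^{-1}(\mathbf{U}^{-1}+\tilde{\check{\mathbf{R}}}^{-1})^{-1}\mathbf{U}^{-1}$, which reproduces (\ref{b1}) exactly. Your closing caveats --- that the ``common innovation'' step is precisely what the hypothesis ``the innovation does not change dramatically'' is buying, and that the monotone interpretation is cleanest when $\tilde{\check{\mathbf{R}}}$ is sign-definite --- are well taken and, if anything, make the scope of the statement more explicit than the paper itself does.
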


\begin{proof}
\emph{Please refer to \cite{yu2021robust} for details.}	
\end{proof}

\subsubsection{Performance with different $\rho$}\label{ss4}

Then, we discuss the influence of parameter $\rho$ (forgetting factor). 
From equations (\ref{eq36}) and (\ref{eq38}), we obtain 
\begin{equation*}\label{eq42}
{\bar{\mathbf{R}}_{k}}^{(i+1)}=\frac{\hat{\mathbf{U}}_{k}^{(i+1)}}{(\hat{u}_{k}^{(i+1)}-m-1){\mathrm{E}^{(i+1)}\left[\lambda_{k}\right]}}.
\end{equation*}
Based on equations (\ref{eq32})-(\ref{eq33}), the above equation can be rewritten as 
\begin{scriptsize}
\begin{equation*}
{\bar{\mathbf{R}}_{k}}^{(i+1)}=\frac{\rho \hat{\mathbf{U}}_{k-1 | k-1}+\mathbf{B}_{k}^{(i)}}{\rho(\hat{u}_{k-1 | k-1}-m-1)+1}.
\end{equation*}
\end{scriptsize}
According to $p({\bar{\bf{R}}}_{k} | \mathbf{z}_{1: k-1}) =\operatorname{IW}({\bar{\bf{R}}}_{k} ; \hat{u}_{k | k-1}, \hat{\mathbf{U}}_{k | k-1})$, ${\bar{\mathbf{R}}_{k-1}}^{(i+1)}$ is formulated as 
$
{\bar{\mathbf{R}}_{k-1}}^{(i+1)}=\frac{\hat{\mathbf{U}}_{k-1 | k-1}}{\hat{u}_{k-1 | k-1}-m-1}.$

Combining the above equations yields 
\begin{equation*}
{\bar{\mathbf{R}}_{k}}^{(i+1)}=\frac{\mathbf{B}_{k}^{(i)}+\rho(\hat{u}_{k-1 | k-1}-m-1) {(\bar{\mathbf{R}}_{k-1})}}{1+\rho(\hat{u}_{k-1| k-1} -m-1)}.
\end{equation*}
Based on equations (\ref{eq9}) and (\ref{eq32}), we obtain 
\begin{equation*}
\hat{u}_{k | k}=\rho(\hat{u}_{k-1 | k-1}-m-1)+m+2.
\end{equation*} 
Then, it follows that
\begin{equation*}
\hat{u}_{k-1 | k-1}-m-1=\rho^{k-1}(\hat{u}_{0 | 0}-m-1)+(1-\rho^{k-1}) /(1-\rho).
\end{equation*}
Set 
$\varsigma(k,\rho)=\rho(\hat{u}_{k-1 | k-1}-m-1)$ and we obtain
\begin{equation}
{\bar{\mathbf{R}}_{k}}^{(i+1)}=\frac{\varsigma(k,\rho) {\bar{\mathbf{R}}_{k-1}}+\mathbf{B}_{k}^{(i)}}{\varsigma(k,\rho)+1}.  \tag{c1} \label{c1}
\end{equation}
Then, $\varsigma(k,\rho)$ can be rewritten as 
\begin{equation}
\varsigma(k,\rho)=\rho^{k}(\hat{u}_{0|0}-m-1)+(\rho-\rho^{k}) /(1-\rho). \tag{c2} \label{c2}
\end{equation}
This yields 
\begin{equation}
\lim _{k \rightarrow+\infty}\varsigma(k,\rho)=\rho /(1-\rho). \tag{c3} \label{c3}
\end{equation}
From (\ref{c1})-(\ref{c3}), we obtain that $\varsigma(k,\rho)$ is a tradeoff parameter to balance the weight of ${\bar{\mathbf{R}}_{k-1}}$ and $\mathbf{B}_{k}^{(i)}$. In addition, $\varsigma(k,\rho)$ is monotonically increasing with respect to $\rho$. Therefore, $\rho\in(0,1]$  can be directly applied to balance the time update ${\bar{\mathbf{R}}_{k-1}}$ and measurement update ${\bar{\mathbf{R}}_{k}}^{(i+1)}$. Considering that multiplicative noise is
always widely slow-varying in engineering applications \cite{huang2017novel,sarkka2009recursive}, this paper sets the range as  $\rho \in [0.6, 1]$. Similarly, we can analyze the IG case.

\subsubsection{Performance with  $\bar{\mathbf{R}}_{0}$}\label{ss4}

We next discuss the effect of $\tilde{\mathbf{R}}_{0}$ (${\sigma}_{0}$ can be similarly available).
In the fixed-point iteration, the initial value is
\begin{align}
{\bar{\mathbf{R}}_{k}}^{(0)}={\bar{\mathbf{R}}_{k-1}}. \tag{d1} \label{d1}
\end{align}
Set $\tau_k=\frac{\zeta(k,\rho) }{\zeta(k,\rho)+1}$, $\mathbf{C}_k=\frac{\mathbf{B}_{k}^{(L-1)}}{\zeta(k,\rho)+1}$,
where $0<\tau_{k}<1$ and $\mathbf{C}_{k}\geq \bf{0}$, (\ref{c1}) is rewritten as 
\begin{equation}
{\bar{\mathbf{R}}_{k}}=\tau_k{\bar{\mathbf{R}}_{k-1}}+\mathbf{C}_{k}. \tag{d2} \label{d2}
\end{equation}
It then follows that 
\begin{equation}
{\bar{\mathbf{R}}_{k-1}}=\sum\nolimits_{i=1}^{k-1}(\prod\nolimits_{j=i+1}^{k-1} \tau_{j}) \mathbf{C}_{i}+(\prod\nolimits_{i=1}^{k-1} \tau_{i}) {\bar{\mathbf{R}}_{0}}. \tag{d3} \label{d3}
\end{equation}
From (\ref{d3}), we know that the initial value has effects on  ${\bar{\mathbf{R}}_{k}}^{(0)}$. ${\bar{\mathbf{R}}_{0}}$ proportionally affects on $\bar{\mathbf{R}}_{k}$ as $k$ increases. To ensure that 
 ${\bar{\mathbf{R}}_{k}}^{(i)}$ converges to the true value, we need to choose
appropriate initial value ${\bar{\mathbf{R}}_{k}}^{(0)}$ to ensure local convergence of VB inference. Therefore, these parameters need to be selected near their true values and we rely on our engineering experience. Thanks to the diagonal form, we can select   $\bar{\mathbf{R}}_{0}$ as $\bar{\mathbf{R}}_{0}=\text{daig}[r_{1},\cdots,r_{j},\cdots,r_{m}]$, where $r_{j}>0$.

\subsubsection{Performance with  ${\bar{\mathbf{R}}_{k}}$}\label{ss4}

Finally, we prove the stability of the presented filter with an initial value $\bar{\mathbf{R}}_{0}$. Based on (\ref{d3}), $\bar{\mathbf{R}}_{0}>\mathbf{0}$, $0<\tau_{j}<1$ and $\mathbf{C}_{j}\geq \mathbf{0}$, we obtain ${\bar{\mathbf{R}}_{k-1}}>\mathbf{0}$.
Note that (\ref{eq31}) yields $\mathbf{B}_{k}^{(i)} \geq \mathbf{0}$.
Then, substituting the above equations into (\ref{d3}) yields 
${\bar{\mathbf{R}}_{k}}^{(i+1)}>\mathbf{0}$.
This means that ${\bar{\mathbf{R}}_{k}}$ is positive definite and indicates that the proposed filter is stable.

\subsection{The bounds of the error covariance}

This section calculates the upper and lower bounds of the estimation error covariance.

In order to obtain the error bounds, it is convenient to define the equivalent system of (\ref{eq1})-(\ref{eq2}) as 
	\begin{align}
	{\bf{x}}_{k} &= {\bf{F}}{\bf{x}}_{k-1} + {\bf{w}}_{k-1}, \notag\\
	{\bf{z}}_{k} &={\bf{H}}_{k}^m{\bf{x}}_{k} + {\bf{v}}_{k}^m, \tag{g1}  \label{g1} 
	\end{align}
	where the state equation is the same as (\ref{eq1}), ${\bf{H}}_{k}^m$ is defined as ${\bf{H}}_{k}^m\triangleq \bar{m}_{k}{\bf{H}}_{k}$($|\bar{m}_{k}|<\infty$),  and ${\bf{v}}_{k}^m$ is a zero-mean Gaussian white noise and satisfies 
	$ \text{E}[ ({\bf{v}}_{j}^m)({\bf{v}}_{k}^m)^{\top}]={\bf{R}}^e_{k}\delta_{kj}$, where ${\bf{R}}^e_{k}= {\bf{\sigma}}_{k}{\bf{H}}_{k}{\bf{{{S}}}}_{k}{\bf{H}}^{\top}_{k}+{\bf{R}}_{k}$. ${\bf{{{S}}}}_{k}$ is defined as ${\bf{S}}_{k} \triangleq \text{E}[ {\bf{x}}_{k}{\bf{x}}^{\top}_{k}]={\bf{F}}{\bf{S}}_{k-1}{\bf{F}}^\top+{\bf{Q}}_{k-1}$ with a boundary condition ${\bf{S}}_{0}=\mathbf{P}_{0 | 0}$.  Since ${\bf{R}}_{k}$ is positive definite, so is ${\bf{R}}^e_{k}$. Then, we define
	\begin{align}
	M(T, T-l)&=\sum\nolimits_{k=T-l}^{T} {\bf{F}}_{k}^{\top} ({\bf{H}}_{k}^m)^{\top} ({\bf{R}}^e_{k})^{-1} ({\bf{H}}_{k}^m) {\bf{F}}_{k}, \tag{g2} \label{g2} \\
	W(T, T-l)&=\sum\nolimits_{k=T-l}^{T-1} {\bf{F}}_{k+1} {\bf{Q}}_{k}  {\bf{F}}_{k+1}^{\top}, \tag{g3}  \label{g3} 
	\end{align}
	where
	${\bf{F}}={\bf{F}}_{k}$ (${{\bf{F}}_{k}}{\bf{F}}_{k-1}={\bf{F}}_{k,k-1}$ and 
	${\bf{F}}_{T,k}={\bf{F}}_{T, T-1} {\bf{F}}_{T-1, T-2} \cdots {\bf{F}}_{k+1, k}$ ($k \leq T $)). Equations (\ref{g2}) and (\ref{g3}) are the stochastic observability matrix and stochastic controllability matrix, respectively. The following theorem on the bounds of the error covariance matrix is established. 
\begin{theorem}
\emph{If the system (\ref{eq1}) and (\ref{g1}) is uniformly controllable and uniformly observable, i.e.,
\begin{align}
\alpha_1 I \leq W(k, k-N+1) \leq \beta_1 I,  \tag{g4} \label{g4} \\
\alpha_2 I \leq M(k, k-N+1) \leq \beta_2 I,   \tag{g5}  \label{g5} 
\end{align}
where ( $\alpha_1$,  $\alpha_2$,  $\beta_1$ and $\beta_2>0$) and $N$ is a positive integer, and assume $\mathbf{P}_{0 | 0} >0$, the estimation error covariance has a uniform upper bound and a uniform lower bound, i.e, 
\begin{equation*}
\frac{\alpha_{1}}{1+n^{2} \beta_{1} \beta_{2}} I \leq \mathbf{P}_{k | k} \leq \frac{1+n^{2} \beta_{1} \beta_{2}}{\alpha_{2}} I.
\end{equation*}}
\end{theorem}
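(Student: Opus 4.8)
The plan is to first recognize that, once the covariance parameters are frozen at their current iterates, the error‑covariance recursion of either proposed filter (equations (\ref{eq17})--(\ref{eq21}) for the MtG filter and (\ref{eq39})--(\ref{eq41}) for the Student's \emph{t} filter) is exactly the error‑covariance recursion of a standard Kalman filter for the equivalent linear system (\ref{g1}): with $\mathbf{H}_k^m=\bar m_k\mathbf{H}_k$ and the positive‑definite effective noise covariance $\mathbf{R}_k^e=\sigma_k\mathbf{H}_k\mathbf{S}_k\mathbf{H}_k^\top+\mathbf{R}_k$, one has $\mathbf{P}_{k|k-1}=\mathbf{F}\mathbf{P}_{k-1|k-1}\mathbf{F}^\top+\mathbf{Q}_{k-1}$, $\mathbf{K}_k=\mathbf{P}_{k|k-1}(\mathbf{H}_k^m)^\top(\mathbf{H}_k^m\mathbf{P}_{k|k-1}(\mathbf{H}_k^m)^\top+\mathbf{R}_k^e)^{-1}$ and $\mathbf{P}_{k|k}=(\mathbf{I}-\mathbf{K}_k\mathbf{H}_k^m)\mathbf{P}_{k|k-1}$. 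Because $\mathbf{R}_k$ (hence $\mathbf{R}_k^e$) is positive definite and $\mathbf{P}_{0|0}>\mathbf{0}$, every matrix in the recursion stays invertible, so the information form $\mathbf{P}_{k|k}^{-1}=\mathbf{P}_{k|k-1}^{-1}+(\mathbf{H}_k^m)^\top(\mathbf{R}_k^e)^{-1}\mathbf{H}_k^m$ is legitimate. The theorem then reduces to the classical two‑sided bound on this Riccati recursion under the uniform controllability/observability hypotheses (\ref{g4})--(\ref{g5}), with the Gramians (\ref{g2})--(\ref{g3}); this is where the work lies.

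For the upper bound $\mathbf{P}_{k|k}\le\frac{1+n^2\beta_1\beta_2}{\alpha_2}\mathbf{I}$ I would prove the equivalent statement $\mathbf{P}_{k|k}^{-1}\ge\frac{\alpha_2}{1+n^2\beta_1\beta_2}\mathbf{I}$. The idea is to iterate the information recursion backward over a window $[k-N+1,k]$ (for $k<N$ the bound follows directly from $\mathbf{P}_{0|0}>\mathbf{0}$ and finitely many steps). The prediction step is handled by the matrix inversion lemma applied to $\mathbf{P}_{j|j-1}^{-1}=(\mathbf{F}_j\mathbf{P}_{j-1|j-1}\mathbf{F}_j^\top+\mathbf{Q}_{j-1})^{-1}$, together with an a priori upper bound on $\mathbf{P}_{j|j-1}$ obtained by telescoping the $\mathbf{Q}$‑terms over the window into $W(k,k-N+1)\le\beta_1\mathbf{I}$ while carrying the state‑transition products $\mathbf{F}_{T,j}$ explicitly. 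Mapping the per‑step observation terms $(\mathbf{H}_j^m)^\top(\mathbf{R}_j^e)^{-1}\mathbf{H}_j^m$ through these transitions and summing reproduces the observability Gramian $M(k,k-N+1)\ge\alpha_2\mathbf{I}$. The factor $n^2\beta_1\beta_2$ in the denominator arises when the controllability upper bound $\beta_1$ is combined with the observability upper bound $\beta_2$ and one converts between spectral and trace norms of the $n\times n$ matrices via $\lambda_{\max}(\mathbf{A})\le\operatorname{tr}(\mathbf{A})\le n\,\lambda_{\max}(\mathbf{A})$, which enters twice.

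The lower bound $\mathbf{P}_{k|k}\ge\frac{\alpha_1}{1+n^2\beta_1\beta_2}\mathbf{I}$ is then proved dually, by bounding $\mathbf{P}_{k|k}^{-1}$ from above. Here $\mathbf{P}_{j|j-1}\ge\mathbf{F}_j\mathbf{P}_{j-1|j-1}\mathbf{F}_j^\top+\mathbf{Q}_{j-1}$ is telescoped over the window into $W(k,k-N+1)\ge\alpha_1\mathbf{I}$, so $\mathbf{P}_{k|k-1}$ is bounded below and $\mathbf{P}_{k|k-1}^{-1}$ above; the observation information accumulated over the window is bounded above using $M(k,k-N+1)\le\beta_2\mathbf{I}$, again with the controllability factor and the norm conversions supplying $1+n^2\beta_1\beta_2$; and the residual ``memory'' term $\mathbf{F}_{k,k-N}^{-\top}\mathbf{P}_{k-N|k-N}^{-1}\mathbf{F}_{k,k-N}^{-1}$ is absorbed either by induction on successive windows or, for small $k$, by $\mathbf{P}_{0|0}>\mathbf{0}$. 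Positive‑definiteness of $\mathbf{P}_{k|k}$ itself is not an issue, since it was effectively established above in the analysis of $\bar{\mathbf{R}}_k$ (via $\mathbf{B}_k^{(i)}\ge\mathbf{0}$ and $\bar{\mathbf{R}}_k>\mathbf{0}$) together with $\mathbf{P}_{0|0}>\mathbf{0}$.

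The main obstacle is the window‑telescoping bookkeeping: carrying the products of state‑transition matrices consistently through the Riccati recursion, invoking the matrix inversion lemma for the prediction step, replacing the one‑step lower bound $\mathbf{P}_{j|j-1}\ge\mathbf{Q}_{j-1}$ by the $N$‑step controllability bound whenever some $\mathbf{Q}_j$ is singular, and --- most delicately --- producing \emph{exactly} the constant $1+n^2\beta_1\beta_2$ out of the interplay of the two Gramian bounds and the repeated spectral‑versus‑trace norm conversions, rather than some other dimension‑dependent constant. Everything else (monotonicity of the Riccati map in its initial condition, positive‑definiteness of $\mathbf{R}_k^e$, and the reduction described in the first paragraph) is routine, and the detailed estimates can be carried out as in \cite{yu2021robust}.
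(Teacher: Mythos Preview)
Your proposal is correct and aligns with the paper's own treatment: the paper does not give an in-text proof of this theorem at all, but simply refers the reader to \cite{yu2021robust} for the details. Your sketch of the classical Riccati two-sided bound under uniform controllability/observability---reducing the filter to the equivalent system (\ref{g1}), using the information form and telescoping over an $N$-step window---is the standard argument one expects that reference to contain, and you even close by deferring to \cite{yu2021robust} yourself.
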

\begin{proof}
\emph{Please refer to \cite{yu2021robust}  for details.}
\end{proof}

\subsection{Convergence analysis of VB inference}

This section investigates the convergence of the fixed-point iteration of the iterative VB procedure (\ref{eq14}), which is used in the derivation process. Since $\varPhi_{k}=\{{\bf{x}}_{k},{\bf{\sigma}}_{k}, {\bar{\bf{R}}}_{k}\}$ is considered, the proposed filters take the form 
\begin{equation}
\varPhi^{(i)}=T(\varPhi^{(i-1)}), \tag{e1} \label{e1}  
\end{equation}
for $i=1, \ldots, L$, where $T$ denotes a certain continuous mapping and $\{\varPhi^{(i)}\}$ denotes the sequence of $\varPhi$. Since (\ref{e1}) is difficult to analyze, we take another form 
\begin{equation}
\varPhi^{(i)}=(1-\varepsilon) \varPhi^{(i-1)}+\varepsilon T(\varPhi^{(i-1)}) \triangleq \Omega_{n}^{\varepsilon}(\varPhi^{(i-1)}), \tag{e2} \label{e2}  
\end{equation}
where $\varepsilon>0$ (when $\varepsilon=1$, (\ref{e2}) becomes (\ref{e1})) and $\Omega_{n}^{\varepsilon}$ is a continuous mapping with respect to $\varepsilon$ and the time index $n$.

We use the parameter set $\varTheta =\left\{{\bf{\sigma}}_{k}, {\bar{\bf{R}}}_{k}\right\}$ and the state ${\bf{x}}_{k}$ in the fixed-point iteration.  Then, the iterative stage in the algorithm corresponding to (16) is 
\begin{small}
\begin{equation}
\varTheta^{(i)}=(1-\varepsilon) \varTheta^{(i-1)}+\varepsilon T(\varTheta^{(i-1)}) \triangleq \Omega_{k}^{\varepsilon}(\varTheta^{(i-1)}). \tag{e3} \label{e3} 
\end{equation}
\end{small}
In the fixed-point iteration of the VB procedure, we apply an approximation probability density $q(\mathbf{x}_{k}, \varTheta)$ for $p(\mathbf{x}_{k}, \varTheta|{\bf{y}}_{k})$, which can be factorized as $q(\mathbf{x}_{k}, \varTheta)=q(\mathbf{x}_{k})q(\varTheta)$. The states $q(\mathbf{x}_{k})$ and the values of the hyperparameters of $q(\varTheta)$ are obtained by an iterative procedure. For the $(i)$-th iteration,  $(i-1)$-th state estimation $q^{(i-1)}(\mathbf{x}_{k})$ is given. 
 Then, we perform the following two steps of VB procedure. 

\textbf {Step 1}. Optimize the hyperparameters in $q^{(i)}(\varTheta)$  for fixed $\{q^{(i-1)}({\mathbf{x}_{k}}), i=1, \ldots, N\}$.

\textbf {Step 2}. Optimize $q^{(i)}({\mathbf{x}_{k}})$
for fixed $q^{(i)}(\varTheta)$, which is calculated by Step 1.

Now, we investigate the convergence of the aforementioned two steps.  Despite the convergence of a fixed-point algorithm is given in \cite{2015Convergence}, the convergence of VB inference using the fixed-point iteration is still an open issue.

Suppose that the true value of the parameter $\varTheta$ is $\varTheta^\ast$, we next investigate whether the iterative algorithm (\ref{e3}) with the aforementioned two steps is convergent. The following theorem can be established.

\begin{theorem}
\emph{With probability 1 as $n$ approaches infinity, the iterative procedure (\ref{e3}) converges locally to the true value $\varTheta^\ast$ whenever $0<\varepsilon<2$, i.e., (\ref{e3}) converges to the true value $\varTheta^\ast$ whenever the starting value is sufficiently near to $\varTheta^\ast$.}
\end{theorem}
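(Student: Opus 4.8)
The plan is to read (\ref{e3}) as a damped (relaxed) fixed-point iteration and to split the argument into a deterministic local-attraction statement for a \emph{limiting} map together with a stochastic-approximation perturbation argument that transfers it to the actual, time-dependent iteration. First I would identify the limiting map. The VB sub-updates that build $\Omega_{n}^{\varepsilon}$ depend on the data only through the innovation $\mathbf{z}_{k}-\bar m_{k}\mathbf{H}\hat{\mathbf{x}}_{k|k-1}$ and the second-moment matrices $\mathbf{E}_{k}^{(i)}$, $\mathbf{B}_{k}^{(i)}$, $\mathbf{S}_{k}$ appearing in (\ref{eq23}), (\ref{eq31}), (\ref{eq33}). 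Under Assumption~1 and the uniform controllability and observability assumed in Theorem~3 (which make the underlying filter a stable linear recursion driven by a stationary, ergodic sequence), the strong law of large numbers yields $\Omega_{n}^{\varepsilon}(\varTheta)\to\Omega_{\infty}^{\varepsilon}(\varTheta)$ almost surely, uniformly on compact $\varTheta$-sets, where $\Omega_{\infty}^{\varepsilon}(\varTheta)=(1-\varepsilon)\varTheta+\varepsilon T_{\infty}(\varTheta)$ and $T_{\infty}$ is the coordinate-ascent VB update evaluated at population moments.

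Second, I would prove that $\varTheta^{\ast}$ is a locally asymptotically stable fixed point of $\Omega_{\infty}^{\varepsilon}$ for every $0<\varepsilon<2$. One checks that $T_{\infty}(\varTheta^{\ast})=\varTheta^{\ast}$, a \emph{Fisher-consistency} property: when $\sigma_{k}$ and $\bar{\mathbf{R}}_{k}$ equal their true values, the exponentially weighted inverse-Gamma / inverse-Wishart recursions (\ref{eq23}), (\ref{eq33}) together with $\mathrm{E}[\mathbf{E}_{k}]=\mathrm{E}[\mathbf{B}_{k}]=\bar{\mathbf{R}}_{k}$ reproduce those values, so $\Omega_{\infty}^{\varepsilon}(\varTheta^{\ast})=\varTheta^{\ast}$. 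Linearizing gives $D\Omega_{\infty}^{\varepsilon}(\varTheta^{\ast})=(1-\varepsilon)I+\varepsilon J$ with $J\triangleq DT_{\infty}(\varTheta^{\ast})$. Since $T_{\infty}$ is one sweep of block coordinate ascent on the variational free energy (the ELBO), $J$ has the structure of the EM rate matrix, i.e.\ it is similar to a symmetric matrix with spectrum contained in $[0,1)$ (semiconvergence of coordinate ascent, with strict upper bound under the identifiability granted by $\mathbf{R}_{k}>0$, $\mathbf{P}_{0|0}>0$ and uniform observability). Hence every eigenvalue of $D\Omega_{\infty}^{\varepsilon}(\varTheta^{\ast})$ is of the form $1-\varepsilon\mu$ with $\mu\in(0,1]$, so $|1-\varepsilon\mu|<1$ exactly when $0<\varepsilon<2$; thus $\rho\bigl(D\Omega_{\infty}^{\varepsilon}(\varTheta^{\ast})\bigr)<1$ and, by Ostrowski's local-convergence theorem for stationary iterations, $\varTheta^{\ast}$ is a point of attraction of $\Omega_{\infty}^{\varepsilon}$.

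Third, I would transfer attraction to the stochastic iteration. Write $\Omega_{n}^{\varepsilon}=\Omega_{\infty}^{\varepsilon}+\Delta_{n}$ with $\sup_{\varTheta\in B}\|\Delta_{n}(\varTheta)\|\to 0$ almost surely on a neighborhood $B$ of $\varTheta^{\ast}$. Taking the quadratic Lyapunov function $V(\varTheta)=(\varTheta-\varTheta^{\ast})^{\top}P(\varTheta-\varTheta^{\ast})$ adapted to the contraction $D\Omega_{\infty}^{\varepsilon}(\varTheta^{\ast})$, one has $V(\Omega_{\infty}^{\varepsilon}(\varTheta))\le\kappa V(\varTheta)$ with $\kappa<1$ on $B$; absorbing the vanishing perturbation $\Delta_{n}$ and shrinking $B$ if necessary gives $V(\varTheta^{(i)})\le\kappa' V(\varTheta^{(i-1)})$ with $\kappa'<1$ for all sufficiently large $n$ whenever $\varTheta^{(i-1)}\in B$, which forces $\varTheta^{(i)}\to\varTheta^{\ast}$ once the iterate enters $B$. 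This is the perturbed-contraction argument underlying the fixed-point convergence result in \cite{2015Convergence}, and it gives convergence with probability one.

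The hard part will be the spectral claim on $J=DT_{\infty}(\varTheta^{\ast})$: making rigorous that the two VB sub-steps (updating $q(\varTheta)$ for fixed $q(\mathbf{x}_{k})$, then $q(\mathbf{x}_{k})$ for fixed $q(\varTheta)$) compose into a linearization that is a product of two symmetric, projection-like information operators — the analogue of the ``fraction of missing information'' matrix in EM — and that the pertinent information matrix is strictly positive definite, so the spectrum of $J$ stays strictly below $1$ (which is what licenses the full range $0<\varepsilon<2$). Controlling the feedback through $\mathbf{S}_{k}=\mathrm{E}[\mathbf{x}_{k}\mathbf{x}_{k}^{\top}]$, which couples the covariance update back to the state estimate, and confirming the sign and symmetry structure there, is the delicate step; the remaining ingredients (the strong law for the limiting map, Ostrowski's theorem, and the perturbed-Lyapunov estimate) are standard.
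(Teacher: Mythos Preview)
The paper does not actually supply a proof of this theorem: the entire argument is deferred to the companion report \cite{yu2021robust}, so there is no in-paper derivation to compare your proposal against. What the paper does give is the setup (\ref{e1})--(\ref{e3}) and the two alternating VB steps, together with the remark that the fixed-point analysis in \cite{2015Convergence} does not by itself cover the VB case. Your plan reconstructs exactly the kind of argument one expects the deferred proof to contain: identify a limiting deterministic map $\Omega_{\infty}^{\varepsilon}$, linearize at $\varTheta^{\ast}$, use the EM/coordinate-ascent spectral structure of $J=DT_{\infty}(\varTheta^{\ast})$ to get $\rho((1-\varepsilon)I+\varepsilon J)<1$ for $0<\varepsilon<2$, then absorb the stochastic perturbation via a Lyapunov function as in \cite{2015Convergence}. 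That is a correct and standard route.

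Two cautions. First, the algebra behind the range $0<\varepsilon<2$ requires the eigenvalues of $J$ to lie in $[0,1)$; for a \emph{single} VB sweep this is the ``fraction of missing information'' argument, but the present update also threads $\mathbf{S}_{k}$ back into the covariance step (\ref{eq17}), (\ref{eq23}), which is not part of the textbook two-block coordinate ascent. You flag this, and it is genuinely the place where a careless linearization could produce eigenvalues outside $[0,1)$ and invalidate the clean $\varepsilon<2$ bound. Second, the ``with probability~1 as $n\to\infty$'' phrasing in the theorem is about the data-dependent map $\Omega_{n}^{\varepsilon}$ stabilizing, not about the inner iteration index; your decomposition $\Omega_{n}^{\varepsilon}=\Omega_{\infty}^{\varepsilon}+\Delta_{n}$ with a.s.\ uniform vanishing of $\Delta_{n}$ is the right reading, but it relies on an ergodicity/SLLN statement for the filter innovations that in turn leans on Theorem~3 and its assumptions---make that dependency explicit rather than leaving it as a parenthetical.
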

\begin{proof}
\emph{Please refer to \cite{yu2021robust} for details.}
\end{proof}

{\section{Simulations}\label{se4}}

In this section, we use a simulation example to verify the presented filters.
A target moves with a constant velocity in 2-D space following 
a motion model given by $\mathbf{x}_{k}=\left[\begin{array}{cc}
	\mathbf{I}_{2} & \Delta t \mathbf{I}_{2} \\
	\mathbf{0} & \mathbf{I}_{2}
\end{array}\right] \mathbf{x}_{k-1}+\mathbf{w}_{k-1}$ and $\mathbf{z}_{k}=m_k \left[\begin{array}{ll}
\mathbf{I}_{2} & \mathbf{0}
\end{array}\right] \mathbf{x}_{k}+\mathbf{v}_{k}$, where $ \mathbf{x}_{k}=[x_{k},y_{k},\dot{x}_{k},\dot{y}_{k}]^{\top}$ denotes the cartesian coordinates and corresponding velocities,  $\mathbf{I}_{2}$ is a 2-D identity matrix and $\Delta t=1s$ is the sampling interval. 
Multiplicative noise $m_k$ is set to be a Gaussian distribution with mean $\bar{m}_k=5.5$ and covariance ${\bf{\sigma}}_{k}=[2+0.05 \cos (\frac{\pi k}{\text{T}})]$, where $\text{T}=\SI{500}{s}$ is the total time. The additive process and measurement noises are set to be Gaussian  with zero means and covariance matrices ${\text{Q}}_{k}=q\left[\begin{array}{cc}
\frac{\Delta t^{3}}{3} \mathbf{I}_{2} & \frac{\Delta t^{2}}{2} \mathbf{I}_{2} \\
\frac{\Delta t^{2}}{2} \mathbf{I}_{2} & \Delta t \mathbf{I}_{2}
\end{array}\right] $  and $\text{R}_{k}= r {\left[\begin{array}{cc}1 & 0.5 \\0.5 & 1\end{array}\right]}$, where $q$ and $r$ are respectively given as $q= \SI{1}{m^{2}/s^{3}}$ and $r=\SI{100}{m^{2}} $.

Because no prior information on multiplicative noise covariance is available, $\sigma_{0}\sim \operatorname{IG}(1, 1)$ was used. 
In addition, the initial nominal covariance is selected as  $\tilde{\mathbf{R}}_{0}=v \mathbf{I}_{2}$. The nominal covariance KF (KF), the robust Student's \emph{t} Kalman filters (RSTKF 1 and RSTKF 2  \cite{huang2017novel1,huang2016robust}), the VB adaptive Kalman filter (VBAKF) \cite{huang2017novel}, the VB particle filter (VBPF) (200 particles) \cite{yu2019estimation} and the optimal Kalman filter for multiplicative noise (OKF) (given true covariance) are compared. In the proposed filters, we set 
$\rho=0.8$, $\alpha_0=\beta_0=1$, $c_{0}=d_{0}=0.1$, $\eta=10^{6}$, and $L=20$. The parameter of initial covariance $v$ is set as $v=3$. Besides, performance metrics are chosen as the root mean square error (RMSE), the average RMSE (ARMSE), the square root of the normalized Frobenius norm (SRNFN) of the measurement noises, and the average SRNFN (ASRNFN). In particular, 
$\text{RMSE} \triangleq \sqrt{\frac{1}{M} \sum\nolimits_{s=1}^{M}\left(\left(x_{k}^{s}-\hat{x}_{k}^{s}\right)^{2}+\left(y_{k}^{s}-\hat{y}_{k}^{s}\right)^{2}\right)}$, $\text{ARMSE} \triangleq \sqrt{\frac{1}{MT}\sum\nolimits_{k=1}^{T} \sum\nolimits_{s=1}^{M}\left(\left(x_{k}^{s}-\hat{x}_{k}^{s}\right)^{2}+\left(y_{k}^{s}-\hat{y}_{k}^{s}\right)^{2}\right)}$, $\text{SRNFN} \triangleq(\frac{1}{m^{2} M} \sum\nolimits_{s=1}^{M}\|\mathbf{\hat{\bar{R}}}_{k \mid k-1}^{s}-\mathbf{\bar{R}}_{k \mid k-1}^{s}\|^{2})^{\frac{1}{4}}$, and $\text { ASRNFN } \triangleq(\frac{1}{m^{2} M T} \sum\nolimits_{k=1}^{T} \sum\nolimits_{s=1}^{M}\|\mathbf{\hat{\bar{R}}}_{k \mid k-1}^{s}-\mathbf{\mathbf{\bar{R}}}_{ k \mid k-1}^{s}\|^{2})^{\frac{1}{4}}$,  where $(x_{k}^{s}, y_{k}^{s})$ and $(\hat{x}_{k}^{s}, \hat{y}_{k}^{s})$ are, respectively, the true and the estimated variables (position or velocity) at the $s$-th Monte Carlo run, 
$\mathbf{\hat{\bar{R}}}_{k \mid k-1}^{s}$ and $\mathbf{\bar{R}}_{ k \mid k-1}^{s}$ denote, respectively, the estimated and the true total measurement noise covariances, and $M = 100$ denotes the total number of Monte Carlo runs. The initial state is given as  
$\mathbf{x}_{0} \sim \mathrm{N}\left(\mathbf{x}_{0}, \mathbf{P}_{0}\right)$ (initialization (sampling) for VBPF),  where $\mathbf{x}_{0}=[100,100,10,10]^{\mathrm{\top}}$  and  $\mathbf{P}_{0}=100\times  \mathbf{I}_{4}$.

\begin{figure}[!t]\centering
		\includegraphics[width=6cm]{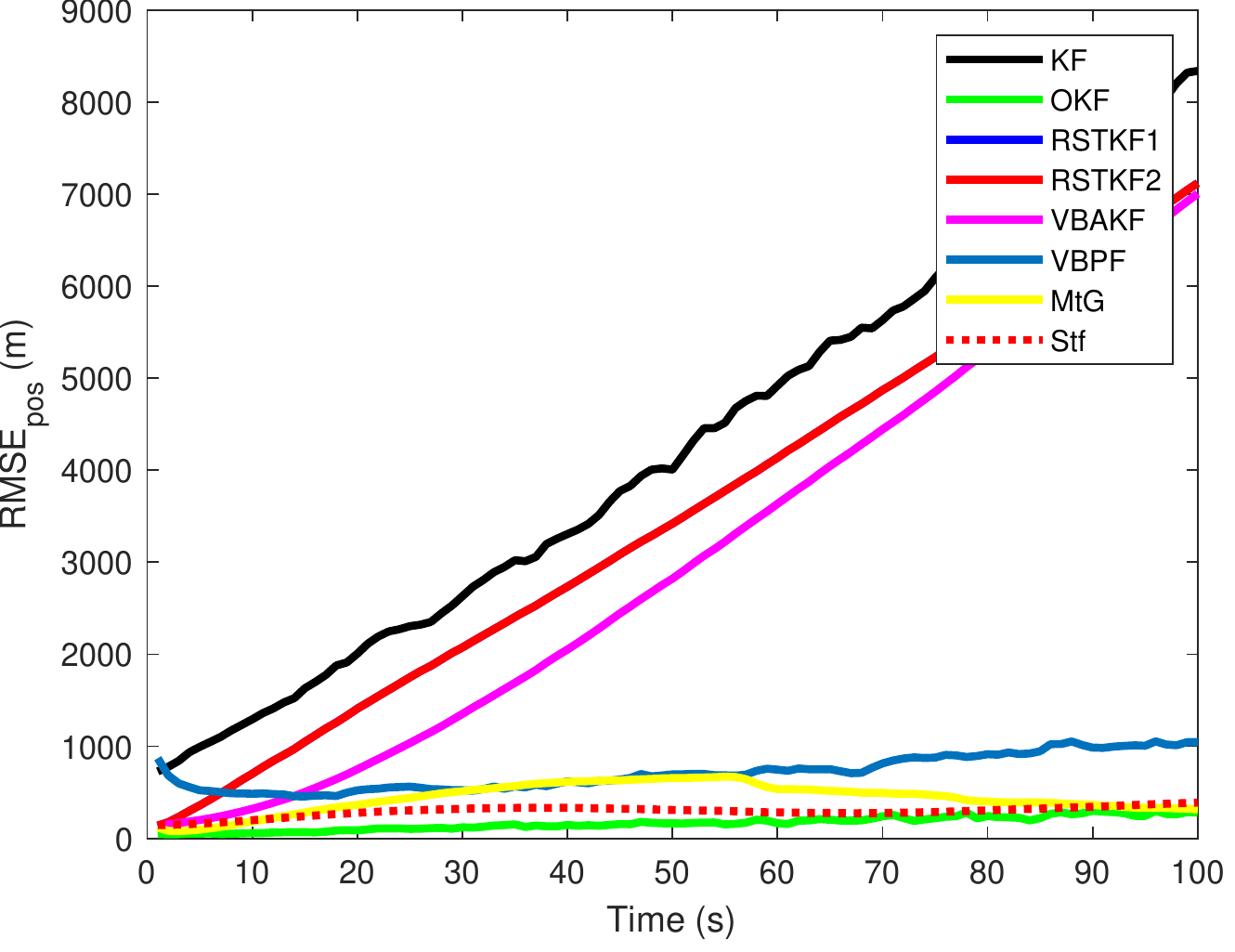}\\
		\includegraphics[width=6cm]{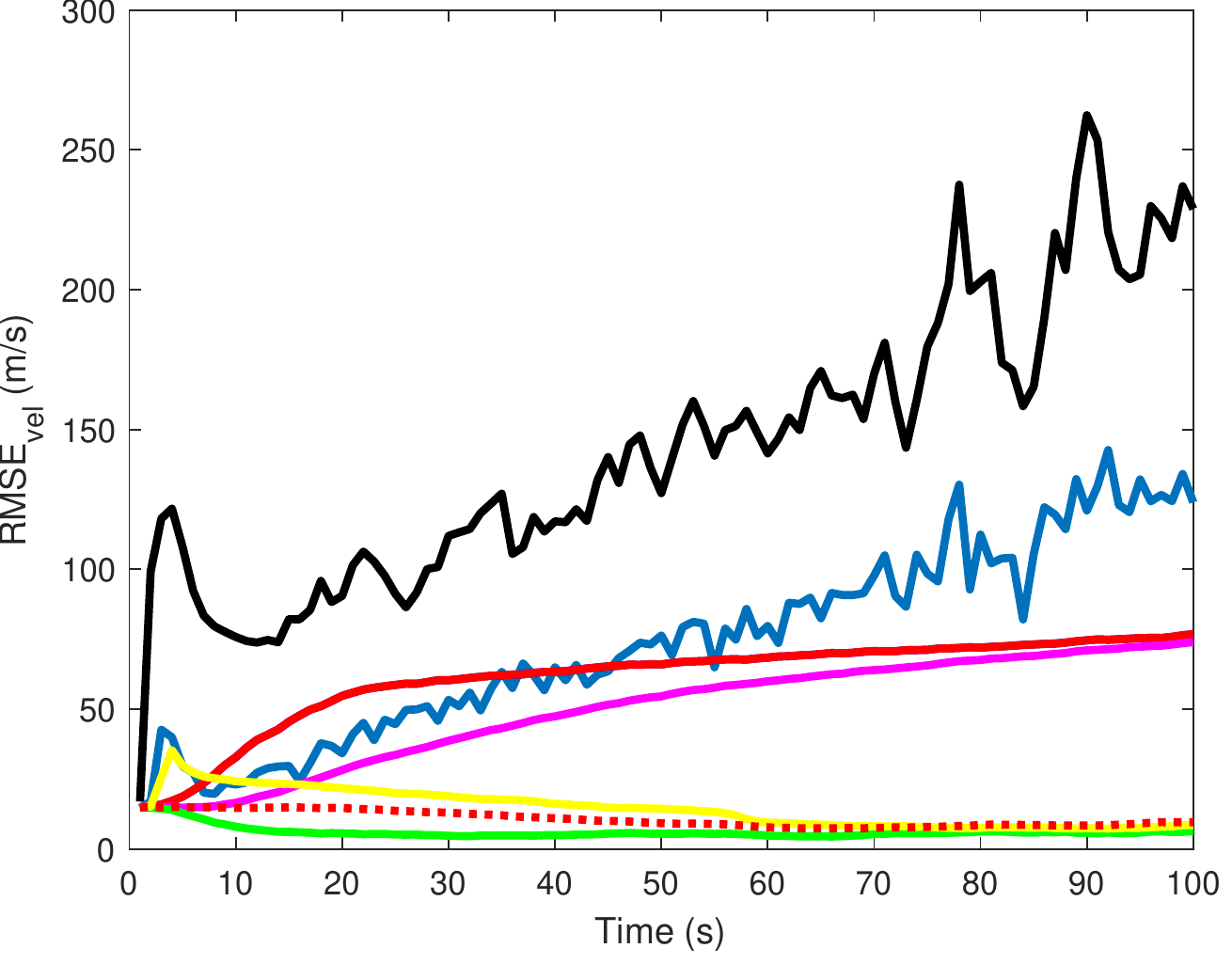}
		\caption{$\text{RMSE}_{\text{pos}}$ and $\text{RMSE}_{\text{vec}}$ of the state-of-the-art filters.}\label{FIG_1}
\end{figure}

\begin{figure}[!t]\centering
		\includegraphics[width=6cm]{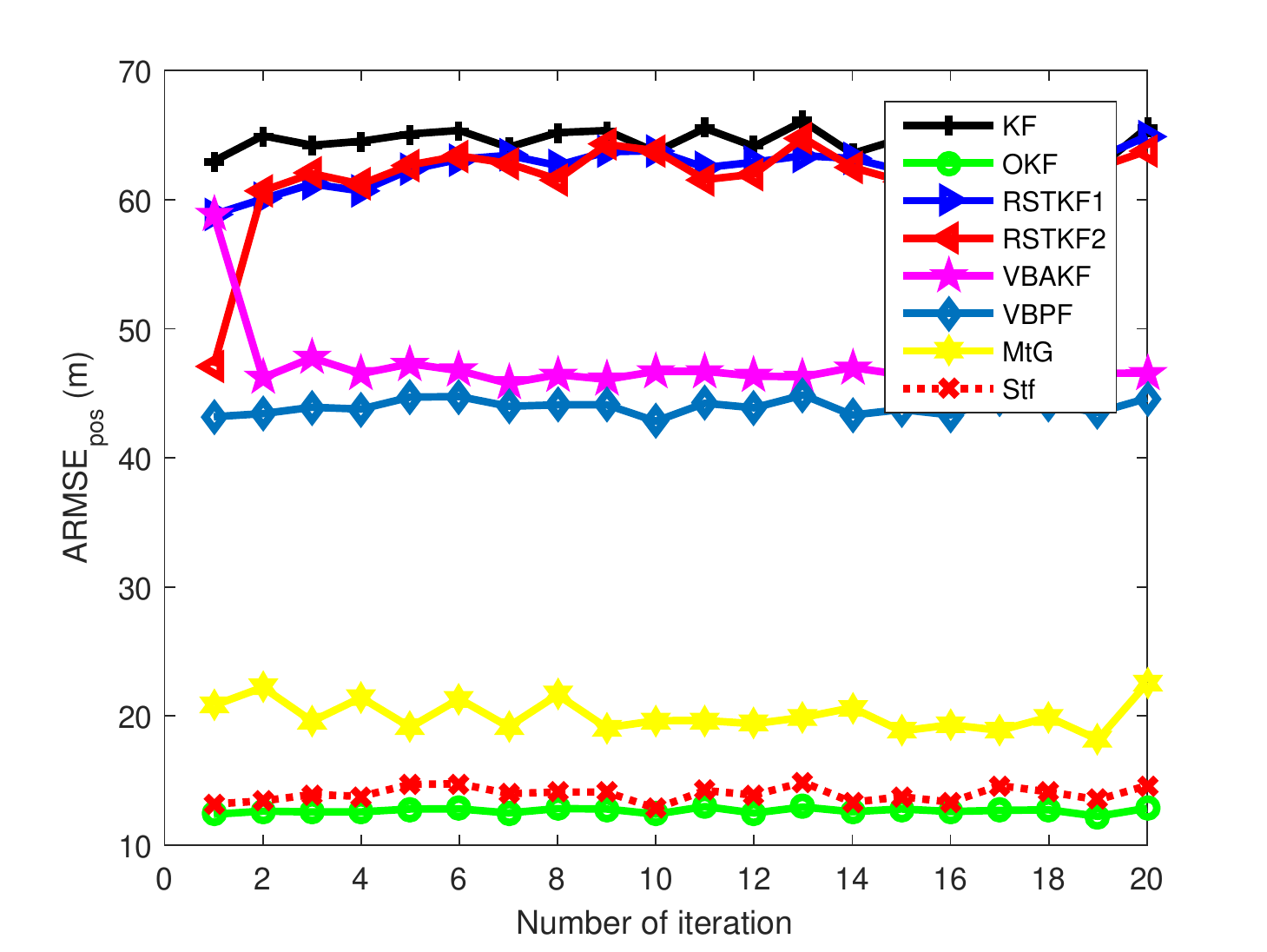}\\
			\includegraphics[width=6cm]{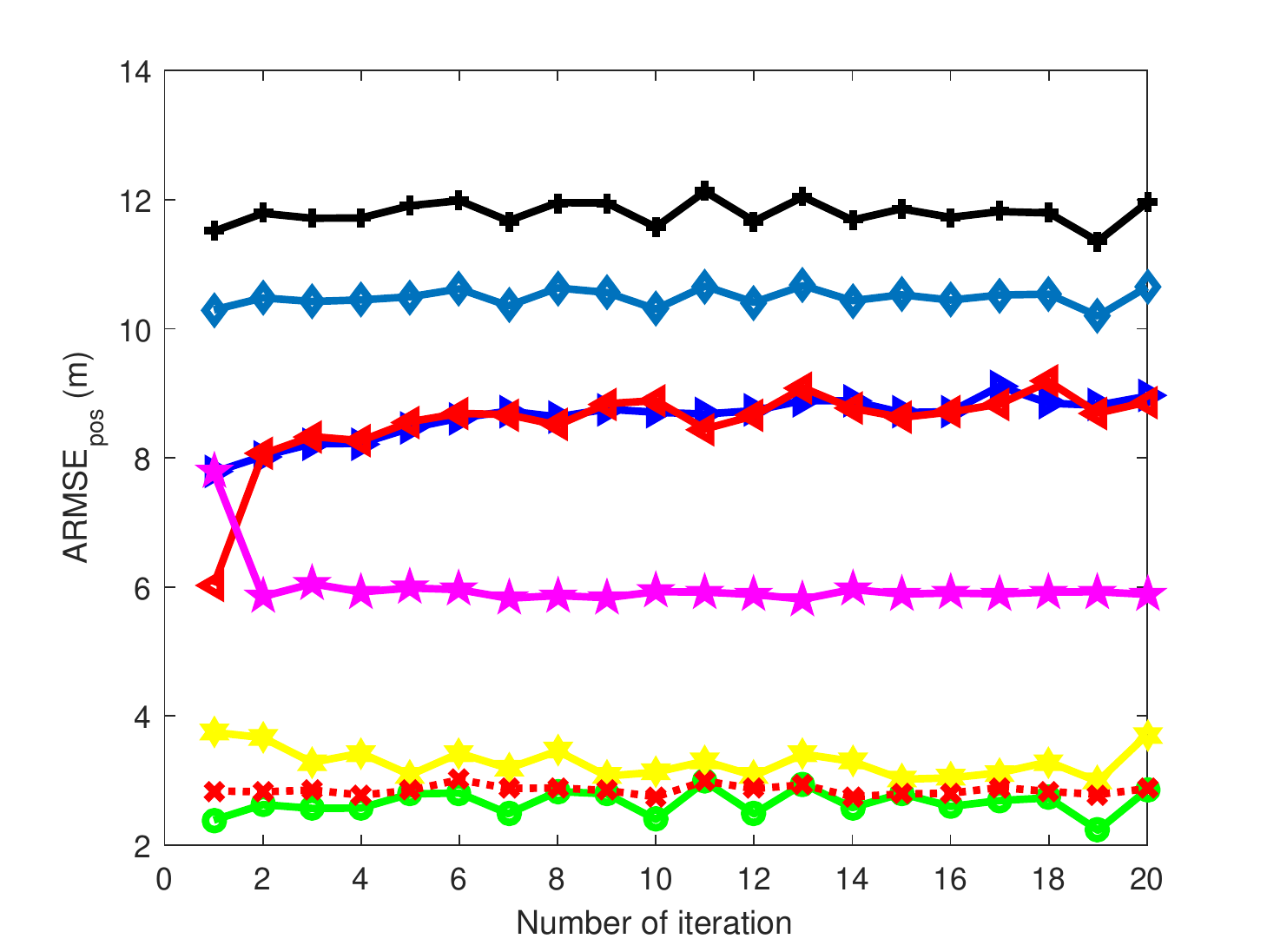}
		\caption{$\text{ARMSE}_{\text{pos}}$ and $\text{ARMSE}_{\text{vec}}$ of the state-of-the-art filters. }\label{FIG_3}
\end{figure}

\begin{figure}[!t]\centering
	\includegraphics[width=6cm]{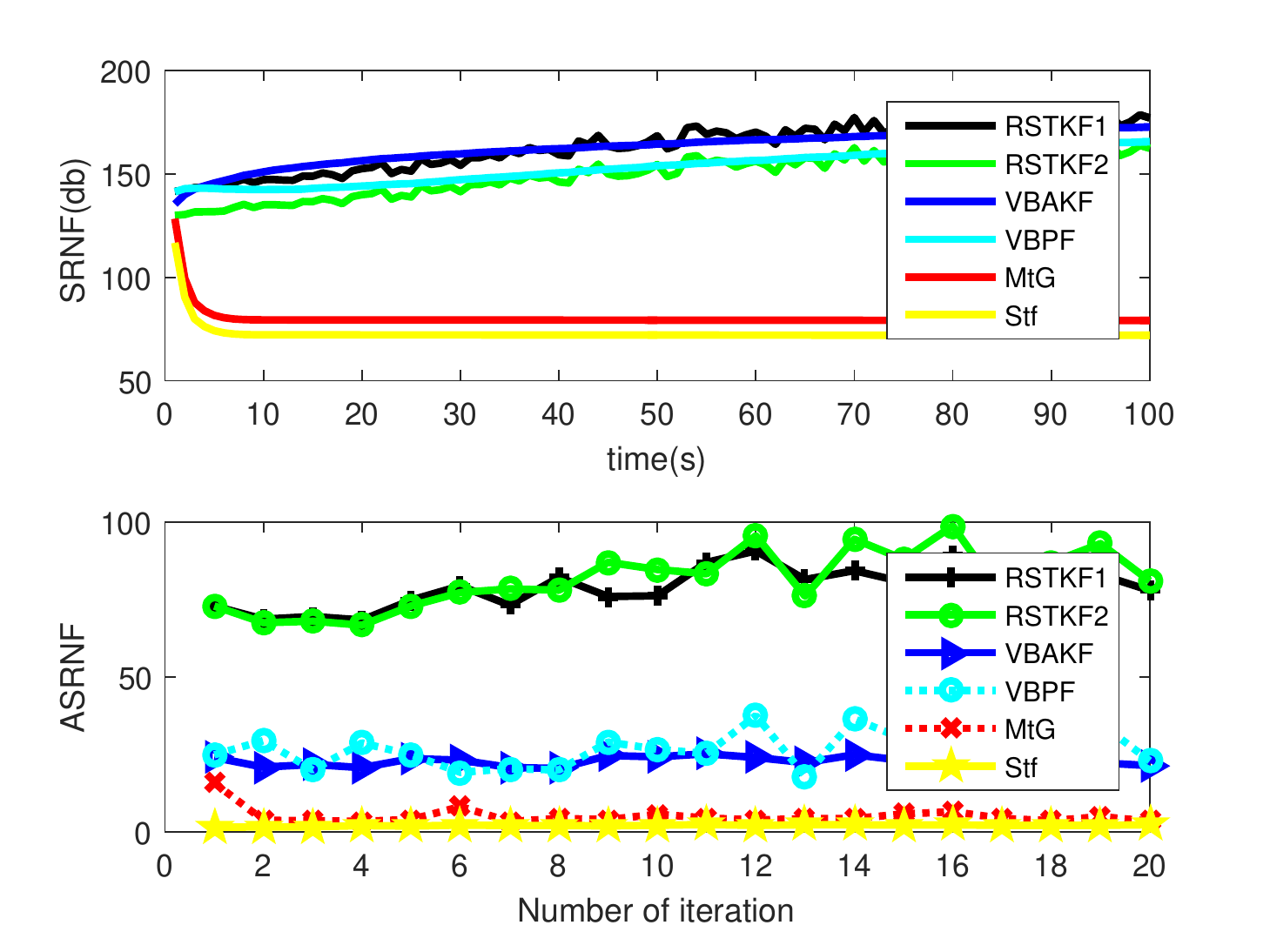}
	\caption{$\text{SRNFN}(\text{log})$ and $\text{ASRNFN} $ of the state-of-the-art filters.}\label{FIG_5}
\end{figure}

The RMSEs  of positions and velocities from the state-of-the-art filters and the presented filters
are shown in Fig. 1. It can be seen that the presented filters have smaller RMSEs than those of state-of-the-art-filters, including VBAKF, VBPF, and RSTKF. It can also be seen that the existing filters diverge eventually, while the proposed filters have robust convergence.
Fig. 2 illustrates the ARMSEs of positions and velocities when $L=[1\quad20]$. It can be seen that the presented filters have smaller ARMSEs and converge to the minimum when $L=2$. 
Fig. 3 shows the quantitative SRNFN and ASRNFN of the existing filters. It is shown that the proposed filters estimate the noise covariance much better than the state-of-the-art filters. Therefore, the proposed filters have a faster convergence rate than those of state-of-the-art filters, including VBAKF, VBPF, and RSTKF.
The Std filter deals with the whole observation noise as a whole, therefore it can get more accurate results. However, the MtG filter only deals with the case of multiplicative noise as an individual one and ignores the possible relationship with the whole. The results are therefore relatively poor but still better than the state-of-the-art filters.

\begin{remark}
\emph{According to the references \cite{huang2017novel,huang2016robust,huang2019novel,yu2021adaptive}, we can see that the model used in the simulation is a typical example of the problem of noise statistics estimation. In fact, the model (\ref{eq1}-\ref{eq2}) considered in this paper is a more generalized form of the model in \cite{huang2017novel,huang2016robust,huang2019novel,yu2021adaptive}. Besides, the existing state-of-the-art filters work either for time-invariant multiplicative noise or only for additive time-varying noise covariances. The proposed filters, on the other hand, are applicable to the case of time-varying multiplicative noise covariance. Since the mean of multiplicative noise in simulation is $5.5$, which indicates that the true state signal is amplified by 5.5 times, the existing filters diverge eventually. However, the proposed filters can effectively	eliminate the influence of multiplicative noise. 
}
\end{remark}

\section{Conclusions} 

In this paper, we studied the joint estimation problem of state and noise covariance for linear systems with unknown covariance of multiplicative noise. Based on assumptions that a Student's $\emph{t}$ distribution and a mixture of two Gaussian distributions as the non-Gaussian likelihood functions, two novel VB based robust filters were developed, where the states together with noise covariances were deduced by choosing the inverse Gamma/Wishart priors. The stability and convergence of the noise covariance parameters and the proposed filters were analyzed. Simulation results illustrated that the presented filters had better performance and were robust enough to resist multiplicative noise.


\bibliographystyle{IEEEtran}
\bibliography{yuxingkai}

\end{document}